\begin{document}

%% ITEM 6 [See the "howto.tex" file.]
%\begin{titlepage}
%
%\title{On the Identifiability of Diagnostic Classification Models}
%
%%%Disable \markright for your submission,
%%%if it includes any author names.
%%%Note: The "submit" package includes a running header.
%%\markright{\MakeLowercase{\textsc{}}}
%
%
%\affil{Columbia University}
%
%%\author{}
%%\affil{}
%
%\end{titlepage}

%% ITEM 7 [See the "howto.tex" file.]
\setcounter{page}{1}
\vspace*{2\baselineskip}

%\RepeatTitle{On the Identifiability of Diagnostic Classification Models}\vskip3pt
\title{Latent Theme Dictionary Model for Finding Co-occurrent Patterns in Process Data}\vskip3pt

\author{Guanhua Fang and Zhiliang  Ying \\
Columbia University}
%this line is added fir arxiv

\linespacing{1.5}
%% ITEM 8 [See the "howto.tex" file.]
\abstracthead
\begin{abstract}

\justify
Process data, which are temporally ordered sequences of categorical observations, are of recent interest due to its increasing abundance and 
the desire to extract useful information. 
A process is a collection of time-stamped events of different types, recording how an individual behaves in a given time period. 
%For example,
%imagine all the possible events that could occur when a student takes an online test, including clicking, searching, scrolling and filtering multiple candidate options.
The process data are too complex in terms of size and irregularity for the classical psychometric models to be directly applicable and, consequently, new ways for modeling and analysis are desired. %effective exploratory analysis tools.
We introduce herein a latent theme dictionary model for processes
that identifies co-occurrent event patterns and individuals with similar behavioral patterns. Theoretical properties are established under certain regularity conditions for the likelihood-based estimation and inference.
A nonparametric Bayes algorithm using the Markov Chain Monte Carlo method is proposed for computation. Simulation studies show that the proposed approach performs well in a range of situations. The proposed method is applied to an item in the 2012 Programme for International Student Assessment with interpretable findings.

\begin{keywords}
latent theme dictionary model, process data, co-occurrent pattern, identifiability.
\end{keywords}
\end{abstract}\vspace{\fill}\pagebreak

%% ITEM 8 [See the "howto.tex" file.]

\justify

\section{Introduction}\label{sec:1}

\textit{Process data} are temporally ordered data with categorical observations. Such data are ubiquitous and common in e-commerce (online purchases), social networking services and computer-based educational assessments. 
In large scale computer-based tests, 
analyzing process data has gained much attention and becomes a core task in the next generation of assessment; see, for example, 2012 and 2015 Programme for International Student Assessment 
\citep[PISA;][]{pisa2012, pisa2015}, 2012 Programme for International Assessment of Adult Competencies \citep[PIAAC;][]{goodman2013literacy}, Assessment and Teaching of 21st Century Skills \citep[ATC21S;][]{griffin2012assessment}.
In such technology-rich tests, there are problem-solving items which require the examinee to perform a number of actions before submitting final answers. 
%is required to perform multiple actions such as clicking, searching and exploring to complete the task.
These actions and their corresponding times are sequentially recorded and saved in a log file. Such log file data could provide extra information about the examinee's latent structure that is not available to traditional paper-based tests, in which only final responses (correct / incorrect) are collected.

Similar to item response theory \citep[IRT;][]{lord1980applications} models and diagnostic classification models \citep[DCMs;][]{templin2010diagnostic}, it is important to characterize item and examinees' characteristics through the calibration of item and person parameters in the analysis of process data.
However, process data are much more complicated in the sense that events occur at irregular time points and event sequence length varies from one examinee to another. 
Different examinees may have different reaction speeds in addition to varied action patterns to complete the task.
In addition, different examinees may have different strategies to reach their answers. These different behavioral patterns inherent in the process data allow us to classify examinees into different groups with meaningful interpretations.
Because some event sequences appear frequently, we may use sequential co-occurrent event patterns to extract important features from process data.

%This leads to the feature extraction problem.

%Individuals with distinct characteristics tend to have different behaviors which can be easily found in their log data. This implies the existence of latent structures beneath the population, that is, people could be classified into different classes based on their event processes. This leads to the clustering problem.
%Second, the intrinsic properties of the processes provide key information about how one event affects the others. 
%Analysis of process data is also challenging since that the categories of response could be huge; event could happen at any time; complex structure may exist in the process; no unified model has been developed.

%Within educational assessments, researchers tend to handle one complex item at a time because every item is designed differently.
There is a recent literature on analysis of process data using data-mining tools. 
%\cite{suh2010want} and \cite{hong2011predicting} worked on information cascades involved with extracting the set of features describing the past information and using these features into a machine learning classifier.
%\cite{luo2015multi} considered a multidimensional Hawkes process by inducing structural constraints on triggering functions to get low-rank feature representations.
%\cite{lian2015multitask} proposed a hierarchical gaussian process model by using past events as kernel features to allow flexible arrival patterns.
\cite{he2016analyzing} proposed to extract and detect robust sequential action patterns via n-gram method on problem-solving items in PIAAC.
\citet{qiao2018data} applied six different classification methods to a ``Tickets" item in PISA 2012 and compared their performances in terms of better feature selection.
\cite{han2019predictive} used a tree-based ensemble method to generate predictive features in PISA items.
%\cite{xu2017dirichlet} introduced a Dirichlet mixture Hawkes process model for clustering individuals based on raw event sequence.
These methods can extract useful features and predict individual performance. However, unlike classical latent variable-based psychometric models, they are essentially data mining algorithms. In particular, they are not generative and lack of statistical interpretation. Furthermore, they do not use the time stamps of actions, which are collected in the process data. On the other hand, model-based approaches to process data have also been developed in recent years. \cite{xu2018latent} proposed a Poisson process-based latent class model for clustering analysis. \cite{xu2019latent} developed a latent topic model with a Markovian structure for finding the underlying dimensions of examinees' latent ability. 
\cite{chen2019continuous} introduced a continuous-time dynamic choice model to characterize the decision making process.
%Other related work include point process method \citep{du2015dirichlet} and machine learning method \citep{petrovic2011rt}.
Despite these efforts, statistical modeling of process data is still in its infancy and it is desirable to develop comprehensive methods that can systematically explore process data, especially in terms of simultaneously classifying individuals and extracting event features. 

This paper proposes a latent theme dictionary model (LTDM). 
Different from latent Dirichlet allocation \citep[LDA;][]{blei2003latent}, a well-known method for identifying word topic (semantic structure),
the proposed model is a latent class-type model with two layers of latent structure that assumes an underlying latent class structure for examinees and a latent ordered pattern association structure for event types (i.e. some event types may appear together frequently). 
%Individuals can be classified based on how similar they are in terms of the patterns.
To incorporate the temporal nature, a survival time model with intensity based on personal latent class is used for gap times between two consecutive events. 
The challenging issues of model identifiability are dealt with through using special dictionary structure and Kruskal's fundamental result of unique decomposition for three-dimensional arrays \citep{kruskal1977three}.
%Additionally, due to the existence of complicated latent structure, we establish the theoretical results in regard to model identifiability and estimation consistency.
A nonparametric Bayes algorithm (NB-LTDM) is proposed to construct pattern dictionary, classify individuals, and estimate model parameters simultaneously.
%In particular, we adopt a Markov Chain Monte Carlo (MCMC) method with sliced sampler for all parameter estimation.
%We extract the important patterns by keeping those with high evidence probability (defined in Section \ref{sec:4}) and discarding patterns with lower evidences. 
%We put a stick-breaking prior \citep{sethuraman1994constructive} on the latent class probability to avoid specifying the number of latent classes beforehand. 
%We classify examinees by using their posterior distribution and determine the number of clusters by ignoring those small clusters.

The rest of paper is organized as follows. In Section \ref{sec:2}, we describe process data and ```Traffic" item from PISA 2012.
%introduce two useful statistical models, latent class model (LCM) and theme dictionary model (TDM). 
In Section \ref{sec:model}, we propose a new latent theme dictionary model, which combines LCM and TDM and incorporates time structure. In Section \ref{sec:3}, we develop theoretical results on model identifiability and estimation consistency. In Section \ref{sec:4}, we discuss the computational issue and propose the NB-LTDM algorithm.
The simulation results are presented in Section \ref{sec:5}. In Section \ref{sec:6}, we apply the proposed method to the ``Traffic" item in PISA 2012 and obtain some interpretable results. Some concluding remarks are given in Section \ref{sec:7}.

\section{\label{sec:2} Process Data and Traffic Item}

%\subsection{\label{pre:process} }

The process data here refer to a sequence of ordered events (actions) coupled with time stamps. For an examinee, his/her observed data are denoted by $((e_1,t_1), \ldots, (e_n, t_n)$, $\ldots, (e_N, t_N))$, where $e_n$ is the $n$th event and $t_n$ is its corresponding time stamp. 
We have $0 < t_1 < \ldots < t_N$ and $e_n \in \mathcal E$,
where $\mathcal E$ is the set of all possible event types. For notational simplicity, we write $e_{1:N} = (e_n: n = 1, \ldots, N)$ and $t_{1:N} = (t_n: n = 1, \ldots, N)$.

\begin{figure}
	\centering
	% Requires \usepackage{graphicx}
	\includegraphics[width=6in]{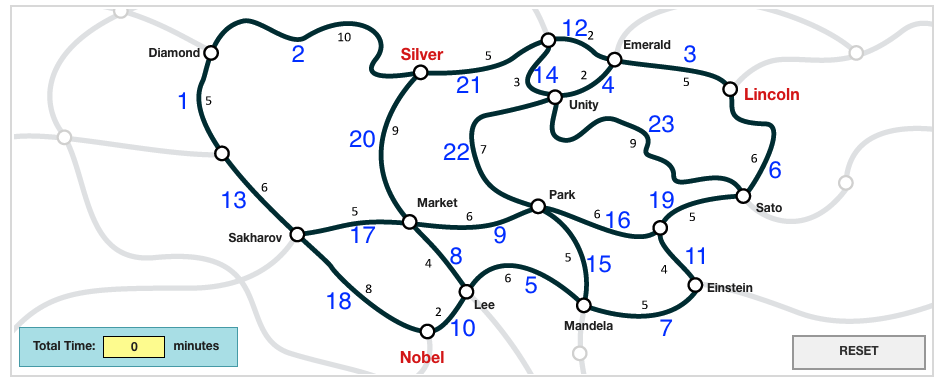}\\
	\caption{Map for ``Traffic" item interface, the big blue number is the label for road and the small black number represents the time for traveling on that road.}\label{map}
\end{figure}

We use the ``Traffic" item from PISA 2012 \citep{pisa2012} as a motivational example to illustrate various concepts and notation. 
This item is publicly available online at ``http://www.oecd.org/pisa/test-2012/testquestions/question1/".
PISA is a worldwide assessment to evaluate educational performances of different countries and economies. 
%by measuring 15-year-old school students' scholastic performance on mathematics, science, and reading.
The ``Traffic" item contains three questions where the most challenging one asks the examinee to operate on a computer to complete the task, i.e., to locate a meeting point which is within 15 minutes away from three places, Silver, Lincoln and Nobel.
There are two correct answers, ``Park" and ``Silver" for this task.
Figure \ref{map} shows the initial state of the computer screen. There are 16 destinations and 23 roads in the map.
The integer in blue next to each road is the road number and the integer in black is the traveling time from one end to the other.
The examinee could click a road to highlight it, re-click a clicked road to unhighlight, and use ``RESET" button to remove all highlighted roads.
The ``Total Time" box shows the time for traveling on the highlighted roads.
Once a road is clicked, the corresponding time would be added to this box.
Each action and its corresponding time are sequentially saved in the log file during the process of completing the item.
A typical example of the action process of one specific examinee and its cleaned version are shown in Tables \ref{datastructure2} and \ref{cleandata}.
In this case, $\mathcal E = \{1, 2, \ldots, 23\}$. After removing unneeded rows (``START\_ITEM", ``END\_ITEM", ``Click", ``SELECT"),  we can see that there are 16 meaningful actions performed by this examinee as listed in Table \ref{cleandata}. His/her observed data are 
$$e_{1:16} = (10, 8, 9, \ldots, 9, 8, 10),
\quad t_{1:16} = (27.70, 28.60, 29.40, \ldots, 46.00, 47.70, 48.70).$$ 

\begin{table}
	\caption{The log file of an examinee.}
	\centering
	\label{datastructure2}
	%\begin{adjustbox}{max width=\textwidth}
	{\scriptsize
		\begin{tabular}{l l l l}
			\hline\hline
			event\_number & event & time & event\_value \\
			\hline
			1 & START\_ITEM & 0.00 & NULL  \\
			2 & click & 24.60 & paragraph01  \\
			3 & ACER\_EVENT & 27.70 & 00000000010000000000000  \\
			4 & click & 27.70 & hit\_NobelLee  \\
			5 & ACER\_EVENT & 28.60 & 00000001010000000000000  \\
			6 & click & 28.60 & hit\_MarketLee  \\
			7 & ACER\_EVENT & 29.40 & 00000001110000000000000 \\
			8 & click & 29.40 & hit\_MarketPark  \\
			9 & ACER\_EVENT & 30.50 & 00000001110000000001000  \\
			%10 & click & 30.50 & hit\_SilverMarket& - & 4.20 \\
			%11 & ACER\_EVENT & 34.70 & 00100001110000000001000 & 3 & - \\
			%12 & click & 34.70 & hit\_EmeraldLincoln& - & 2.10 \\
			%13 & click & 36.00 & Unity& - & - \\
			%14 & ACER\_EVENT & 36.80 & 00100001110000000001010 & 22 & - \\
			%15 & click & 36.80 & hit\_UnityPark & - & 1.10 \\
			%16 & click & 37.30 & Unity & - & - \\
			%17 & ACER\_EVENT & 37.90 & 00110001110000000001010 & 4 & - \\
			%18 & click & 37.90 & hit\_EmeraldUnity & - & - \\
			%19 & ACER\_EVENT & 38.80 & 00110001010000000001010 & 9 & - \\
			%20 & click & 38.80 & hit\_MarketPark & - & 0.90 \\
			%21 & ACER\_EVENT & 39.70 & 00110000010000000001010 & 8 & - \\
			%22 & click & 39.70 & hit\_MarketLee & - & 0.90 \\
			%23 & ACER\_EVENT & 40.60 & 00110000000000000001010 & 10 & - \\
			%24 & click & 40.60 & hit\_NobelLee & - & 1.20\\
			%25 & ACER\_EVENT & 41.80 & 00110000000000000000010 & 20 & - \\
			%26 & click & 41.80 & hit\_SilverMarket & - & - \\
			%27 & ACER\_EVENT & 44.70 & 00110000000000000001010 & 19 & - \\
			%28 & click & 44.70 & hit\_SilverMarket & - & 1.30 \\
			... & ... & ... & ...  \\
			29 & ACER\_EVENT & 46.00 & 00110000100000000001010  \\
			30 & click & 46.00 & hit\_MarketPark  \\
			31 & ACER\_EVENT & 47.70 & 00110001100000000001010  \\
			32 & click & 47.70 & hit\_MarketLee \\
			33 & ACER\_EVENT & 48.70 & 00110001110000000001010 \\
			34 & click & 48.70 & hit\_NobelLee \\
			35 & Q3\_SELECT & 54.70 & Park  \\
			36 & END\_ITEM & 66.20 & NULL  \\
			\hline
		\end{tabular}
	}
	%\end{adjustbox}
\end{table}

\begin{table}
	\caption{The cleaned version of log data.}
	\centering
	\label{cleandata}
	\begin{tabular}{l l c}
		\hline\hline
		event\_number & time & event type\\
		\hline
		1 & 27.70 & 10 \\
		2 & 28.60 & 8 \\
		3 & 29.40 & 9 \\
		% 4 & 30.50 & 20 \\
		%10 & click & 30.50 & hit\_SilverMarket& - & 4.20 \\
		... & ... & ...  \\
		14 & 46.00 & 9 \\
		15 & 47.70 & 8 \\
		16 & 48.70 & 10 \\
		\hline
	\end{tabular}
\end{table}

As seen in the above example, process data are more complicated and also more informative compared with the classical item response data.
%We need to take into consideration at least the following aspects. 
% 1. data form
% 2. 
Different examinees may solve the item using different strategies and with different speeds that can only be seen from the response processes/process data, not the final answers/responses.
%It is in the spirit of cognitive diagnosis. 
The form of process data is nonstandard in that action sequences for different examinees are not synchronized and have different lengths.
By extracting examinees' event patterns, including event co-occurrence and time heterogeneity, we can learn their problem-solving strategies and, consequently, better understand the underlying complex problem-solving (CPS) item. 
With these in mind, we introduce our new model in the next section.

%Second, extracting students' event patterns helps us to learn their efficiency and to understand the complex problem solving (CPS) item. In other words, it is useful to identify those events which appear concurrently and frequently.  
%Third, the events occur at the irregular time points and event sequence length varies from one student to another. Thus, the time information is needed for modeling the entire event process. 

%For example, why do we want to learn a dictionary of action sequences? Why do we need latent classes for students? Why do we consider a theme dictionary type model, instead of an LDA-type model? Why considering order of events is important in the current analysis, comparing with the TDM paper where the order of items is not considered.

\section{Latent Theme Dictionary Model}\label{sec:model}

In this section, we propose a latent theme dictionary model (LTDM).
We treat the whole event process of an examinee as a sequence of sentences where each sentence is an ordered subsequence of events.
Our proposed model focuses on modeling the event relationships within a sentence.
By doing this, we effectively reduce raw data length by splitting the original long sequence to multiple shorter sentences. This way of complexity reduction enables us to model sentences instead of the whole process which is more complicated.
We also want to point out that how to split the original event sequence to the sequence of event sentences is case-dependent which can be determined by the expert knowledge. Some special events (e.g. ``Reset" action) can be used to split event sequences in general.

To be precise, we assume that $e_{1:N}$ and $t_{1:N}$ are divided into sentence sequences, i.e. $e_{1:N} = (E_1, \ldots, E_k, \ldots, E_{K})$ and $t_{1:N} = (T_1, \ldots, T_k, \ldots, T_{K})$, where 
$$E_k = (e_{k,1}, \ldots, e_{k,u}, \ldots, e_{k, n_k}) ~ \textrm{and} ~ T_k = (t_{k,1}, \ldots, t_{k,u}, \ldots, t_{k, n_k})$$ 
are called event sentence and time sentence, respectively.
We use $w = [e_1 ~ \ldots ~ e_{l_w}]$ to represent an ordered pattern that events $e_1, \ldots, e_{l_w}$ appear sequentially and call it $l_w$-\textit{gram} (length of this pattern is $l_w$).
Because a 1-gram pattern is also an event, $e$ will be used both for event and 1-gram pattern throughout the sequel.  
Event sentence $E_k$ can be represented as a sequence of patterns, 
%$$E_k = \{w_{\sum_{k^{\prime} < k} l_{E_{k^{\prime}}} + 1}, \ldots, w_{u}, \ldots, w_{\sum_{k^{\prime} \leq k} l_{E_{k^{\prime}}}}\},$$
$$E_k = (w_{k,1}, \ldots, w_{k,u}, \ldots, w_{k,l_{E_k}}),$$
where $l_{E_k}$ is the number of patterns that $E_k$ contains.
Note that an event sentence can be partitioned into different pattern sequences. We use $\mathcal F(E)$ to denote the set of all possible pattern separations for $E$. 
We let $M_l$ be the number of different event patterns of length $l$. 
We define pattern dictionary $\mathcal D$ as the set of all distinct patterns and use $v_D$ to denote its cardinality, i.e. the size of the dictionary. Obviously, $v_D = M_1 + \ldots + M_l + \ldots M_L$, where $L$ is the maximum length of patterns.
Next we use ``Traffic" item as an example to illustrate the relation between event sentences and event separations.  
\begin{example}
	Consider an examinee taking the following actions to complete the ``Traffic" item,
	$$e_{1:16} = (10, 8, 9, 20, 3, 22, 4, 9, 8, 10, 16, 19, 6, 9, 8, 10).$$
	We split this observed event sequence to several sentences by using the following criteria. We treat a sentence as a subsequence of actions of consecutively highlighting/unhighlighting the road. In this case, we have a total of three sentences,
	\begin{eqnarray*}
		E_1 &=&  (10, 8, 9, 20, 3, 22, 4) \\
		E_2 &=&  (9, 8, 10) \\
		E_3 &=& (16, 19, 6, 9, 8,10).
	\end{eqnarray*}
	Suppose the underlying dictionary $\mathcal D$ is 
	$$\{[10], [20],[8~9], [9~8], [10~8], [9~20], [3~22~4],[9~8~10], [16~19~6]\}.$$
	Then, according to this dictionary, we have
	\begin{eqnarray*}
		& &\mathcal F(E_1) = \{S_{1,1}, S_{1,2}\}~ \textrm{with} ~ S_{1,1} = ([10],[8~9],[20],[3~22~4]) 
		~\textrm{and}~ S_{1,2} = ([10~8], [9~20], [3~22~4]), \\
		& &\mathcal F(E_2) = \{S_{2,1}, S_{2,2}\}~ \textrm{with} ~ S_{2,1} = ([9~8],[10]) 
		~\textrm{and}~ S_{2,2} = ([9~8~10]), ~ \textrm{and} \\
		& &\mathcal F(E_3) = \{S_{3,1}, S_{3,2}\}~ \textrm{with} ~ S_{3,1} = ([16~19~6], [9~8],[10]) 
		~\textrm{and}~ S_{3,2} = ([16~19~6],[9~8~10]).
	\end{eqnarray*}
\end{example}

We now specify the probability structure of our proposed model.
On a very high level, the proposed model is motivated by two simpler models, latent class model \citep[LCM;][]{gibson1959three} and theme dictionary model \citep[TDM;][]{deng2014association}; see Appendix E for their definitions. Suppose that the entire population consists of $J$ different classes of examinees, but their class labels are unknown. We use $z \in \{1, \ldots, J\}$ to denote the latent class to which the examinee belongs. 
Latent variable $z$ can be viewed as the examinee's latent attribute or discretized version of latent ability.
In order to jointly model $(E_1, \ldots, E_K)$ and $(T_1, \ldots, T_K)$, it is equivalent to model $(E_1, \ldots, E_K)$ and $(\tilde T_1, \ldots, \tilde T_K)$ where 
\begin{eqnarray}
\tilde T_k = (\tilde t_{k,1}, \ldots, \tilde t_{k,n_k}) ~ \textrm{for}~ k = 1, \ldots, K
\end{eqnarray}
with $\tilde t_{1,1} = t_{1,1}$, $\tilde t_{k,1} = t_{k,1} - t_{k-1, n_{k-1}}$ for $k \geq 2$, and $\tilde t_{k,u} = t_{k,u} - t_{k, u-1}$ for $u \geq 2$.
We make the usual local independence assumption, i.e., 
\begin{eqnarray}\label{cond_ind0}
P((E_1, \ldots, E_K), (\tilde T_1, \ldots, \tilde T_K) | z) 
&=& \prod_{k = 1}^K P(E_k, \tilde T_k | z).
\end{eqnarray}
This leads to  
\begin{eqnarray}\label{cond_ind}
P((E_1, \ldots, E_K), (\tilde T_1, \ldots, \tilde T_K)) 
&=& \sum_{z = 1}^J \pi_z \prod_{k = 1}^K P(E_k, \tilde T_k | z),
\end{eqnarray}
where $\pi_z$ is the probability mass for the $z$th latent class.
Thus the sentences are exchangeable. 
%This makes sense since that, for example in ``Traffic" item, the a sequence of actions of highlighting roads can viewed as a sentence. Examinees can either highlight the roads between ``Silver" and ``Park" first or highlight the roads between ``Nobel" and ``Park" first. These two paths are of equal importance for student to find the final meeting point. 

Next we model event sentence and time sentence by making use of the following conditional probability formula,
\begin{eqnarray}\label{prob_sep0}
P(E_k, \tilde T_k|z) = P(E_k | z) P(\tilde T_k | E_k, z).
%\sum_{S \in \mathcal F(E)} P(E, T, S|z) = \sum_{S \in \mathcal F(E)} P(E, T|S, z) P(S | z).  
\end{eqnarray}
For the event part, notice that each observed event sentence may have different pattern separations (i.e. two different pattern sequences can lead to the same event sequence). Thus
\begin{eqnarray}\label{prob_E}
P(E_k | z) = \sum_{S \in \mathcal F(E_k)} P(E_k|S, z) P(S|z).
\end{eqnarray}
%
%By \eqref{prob_sep0} and \eqref{prob_E}, it follows 
%\begin{eqnarray}\label{prob_sep}
%P(E_k, \tilde T_k|z) = \{\sum_{S \in \mathcal F(E_k)} P(E_k|S, z) P(S|z)\} P(\tilde T_k|E_k, z).
%\end{eqnarray}
%
%Additionally, we assume the event and time are conditionally independent given the separation and latent class label. It leads to
%\begin{eqnarray}\label{ass:ET}
%P(E_k, T_k|S, z) = P(E_k | S, z) P(T_k | S, z).
%\end{eqnarray} 
We further assume 
\begin{eqnarray}\label{ass:S}
P(S | z) =  \frac{1}{n_S !} \prod_{w = 1}^{v_D} \theta_{zw}^{\mathbf 1_{\{w \in S\}}} (1 - \theta_{zw})^{\mathbf 1_{\{w \notin S\}}}, 
\end{eqnarray}
where $\theta_{zw} = P(w \in S | z)$ and $n_{S}$ is the number of patterns in $S$. Here \eqref{ass:S} is analogous to \eqref{eq:TDM} in the TDM setting. It specifies that an examinee from latent class $z$ has event pattern $w$ with probability $\theta_{zw}$. The extra term $\frac{1}{n_S!}$ comes from the fact that we consider the pattern orders.

For modeling $\tilde T_k$, we assume 
\begin{eqnarray}\label{ass:time}
P(\tilde T_k | E_k, z) = \prod_{u=1}^{n_k} P(\tilde t_{k,u}| z),
\end{eqnarray}
where $n_k$ is the length of $E_k$. In other words, the gap time between two consecutive events is assumed to be stationary given the latent class label. 
%Certainly, in future work, we can allow a more complicated time structure. 

From \eqref{prob_sep0} - \eqref{ass:time} and the fact that $P(E_k | S, z) \equiv \mathbf 1_{\{S \in \mathcal F(E_k)\}}$, we have 
\begin{eqnarray}\label{prob_sent}
P(E_k, \tilde T_k | z)
& = &  \bigg\{\sum_{S \in \mathcal F(E_k)} \big[ \frac{1}{n_S !} \prod_{w = 1}^{v_D} \theta_{zw}^{\mathbf 1_{\{w \in S\}}} (1 - \theta_{zw})^{\mathbf 1_{\{w \notin S\}}} \big] \bigg\} \big[
\prod_{u=1}^{n_k} P(\tilde t_{k,u} | z) \big].
\end{eqnarray}
Finally, we specify that gap time $\tilde t_{k,u}$ follows an exponential distribution, i.e.,
\begin{eqnarray}
P(\tilde t_{k,u}|z) = \lambda_z \exp\{- \lambda_z \tilde t_{k,u}\}.
\end{eqnarray}
Different from traditional response time model \citep{van2006lognormal} where only the total time spent on an item is considered, the proposed model accounts for the time allocated to each action. 
Here gap time is assumed to follow the exponential distribution while the log-normal assumption is made in classical response time model.
Variable $\lambda_z$ can be viewed as the personal intensity. 
Such modeling often appears in the literature of event history analysis and survival analysis \citep{allison1984event, aalen2008survival}. 
Here we assume that $\lambda_z$ only depends on the latent class label. 
In general, it could be individual-specific which is related to the frailty model \citep{duchateau2007frailty}. It could also be event-dependent which is known as the competing risk analysis in the survival analysis. \cite{chen2019continuous}'s dynamic choice model considers the intensity function with both individual effect and event task effect.

Lastly, we assume that $K$, which is 
the number of sentences for the subject, follows some distribution function $F$ supported on $\mathbb Z = \{0,1,2, \ldots\}$, i.e.,
\begin{eqnarray}\label{number_sent}
P(K \leq k) =  F(k), ~~ k = 0, 1, 2, \ldots.
\end{eqnarray}
For simplicity, we may assume $F$ is a cumulative distribution function of Poisson random variable with parameter $\kappa$.
Note that instead of directly modeling $K$, we can also fit the proposed model conditioning on $K_i$s. 
This will not effect the estimates of other model parameters.

To summarize, LTDM is a data-driven model  that can be used for learning event patterns and population clustering simultaneously. 
The model framework is built on a very general level in the sense that (1) by letting number of latent classes be 1, the model reduces to TDM-type model with an ordered pattern dictionary and
(2) it reduces to a model for event sentences only if we let $\lambda_j$ be constant across all latent classes (i.e. $\lambda_j = \lambda, j = 1, \ldots, J$).
%Under this model, the whole event sequence as a set of sentences and each sentence is an ordered sequence of different patterns.

One important set of parameters, $\{\theta_{jw}\}:=\{\theta_{jw}; j=1,\ldots,J, w=1,\ldots,v_D\}$, measures how often examinees from different classes use distinct patterns.   
Another set of parameters, $\{\lambda_j\} := \{\lambda_j\}_{j=1}^J$, measures how fast examinees take actions across different groups.
In other words, the population are stratified by two factors, event pattern and respond speed.
From the information theory viewpoint, we can write 
\begin{eqnarray}\label{fisherE}
\mathcal I_{e_{1:N}, t_{1:N}}(\theta) = \mathcal I_{e_{1:N}}(\theta) + \mathcal I_{t_{1:N} | e_{1:N}}(\theta)
\end{eqnarray} 
and 
\begin{eqnarray}\label{fisherT}
\mathcal I_{e_{1:N}, t_{1:N}}(\theta) = \mathcal I_{t_{1:N}}(\theta) + \mathcal I_{e_{1:N} | t_{1:N}}(\theta),
\end{eqnarray} 
where $\mathcal I_{A}(\theta)$ is the Fisher information with respect to a generic parameter $\theta$ for some random vector $A$ and $\mathcal I_{A|B}(\theta)$ is the conditional Fisher information of $\theta$ for some generic random vector $A$ given random vector $B$.
%$\mathbf E = e_{1:N}$, $\mathbf T = t_{1:N}$, $\mathcal I_{\mathbf E}(\theta)$, $\mathcal I_{\mathbf T}(\theta)$  and $\mathcal I_{\mathbf E, \mathbf T}(\theta)$ are the Fisher information from $\mathbf E$, $\mathbf T$ and $(\mathbf E, \mathbf T)$ for a generic parameter $\theta$, and 
%$\mathcal I_{\mathbf T|\mathbf E}(\theta)$ and $\mathcal I_{\mathbf E|\mathbf T}(\theta)$ are the conditional Fisher information. 
In particular, taking $\theta$ to be event pattern parameters, $\{\theta_{jw}\}$,
we know that $\mathcal I_{e_{1:N}, t_{1:N}}(\theta_{jw}) \geq \mathcal I_{e_{1:N}}(\theta_{jw})$.
According to Proposition \ref{prop:decomp} in Appendix B, the equality is achieved if and only if $\lambda_j = \lambda, j = 1,\ldots,J$. This implies that we can estimate event parameter more accurately by observing $t_{1:N}$ when response speed are different across different groups.
Therefore, the inclusion of event times does not only characterizes the response speed of examinees from different classes  but also improves the estimation accuracy of model parameters.

%and each pattern can only appear at most once in one sentence.
%It largely reduces the model complexity.
%% Through this way, the number of sentences indicates the efficiency, i.e., an individual is generally less efficient if he/she has more sentences.
%(2) The proposed model contains both global parameters $\{\theta_{zw}\}$ and personal random effects $\lambda_{a}$ and $\lambda_{b}$. Specifically, $\{\theta_{zw}\}$ shows how often people from different classes use distinct patterns. $\lambda_{a}$ and $\lambda_{b}$ reflect how fast an individual would take actions.
%(3) The latent class membership and pattern association are two unobserved components, aiming for capturing heterogeneity of data.
%(4) Time stamps provide additional information about pattern structure. Smaller the gap is, more likely consecutive events are within the same pattern. 
%These characteristics of our model allow us to identify the underlying patterns as well as individual's latent membership.

%Throughout the sequel,
%Henceforth
We now construct the likelihood function for LTDM. 
We use $m$ to denote the total number of examinees and subscript $i$ to denote the $i$-th examinee. Assume that examinees are independent of each other. Then 
the complete likelihood of $\{ e_{1:N_i}, t_{1:N_i}, K_i, \mathbf S_i, z_i\}_{i=1}^m$ has the following expression,
\begin{eqnarray*}
	L_m & = & 
	\prod_{i=1}^m \bigg \{ \big\{ \prod_{k = 1}^{K_i} \pi_{z_i} P(S_{ik}, \tilde T_{ik}|z_i) \mathbf 1_{\{S_{ik} \in \mathcal F(E_{ik})\}} \big\}
    \frac{\kappa^{K_i} \exp\{-\kappa\}}{K_i !} \bigg \},
\end{eqnarray*}
where $\mathbf S_i = \{S_{ik}, k = 1, \ldots, K\}$.
Furthermore, by summing over/integrating out the unobserved latent variables, we have
%likelihood function of the observed data can be written as 
\begin{eqnarray}\label{likli}
& & P(\{e_{1:N_i}, t_{1:N_i}\}_{i=1}^m) \nonumber \\
& = & \prod_{i=1}^m \bigg\{ \frac{\kappa^{K_i} \exp\{-\kappa\}}{K_i!} \sum_{z_i=1}^J \pi_{z_i}\prod_{k=1}^{K_i}
\big\{\sum_{S_{ik} \in \mathcal F(E_{ik})} \frac{1}{n_{S_{ik}} !} \prod_{w = 1}^{v_D} \theta_{z_iw}^{\mathbf 1_{\{w \in S_{ik}\}}} (1 - \theta_{z_iw})^{\mathbf 1_{\{w \notin S_{ik}\}}} \nonumber \\
& & \times \prod_{u=1}^{n_{ik}} p(\tilde t_{ik,u} | z_i) \big\}  \bigg\}.
\end{eqnarray}

\section{Identifiability \label{sec:3}}

Latent class models (LCMs) often face the issue of identifiability. 
There is an existing literature on identifiability of latent variable models; see \cite{Allman}, \cite{xu2017identifiability}, and references therein.
Theme dictionary model also has the identifiability issue since the underlying separations are unobserved. 
In this section, we address the identifiability issue, specifically towards dictionary and model parameters in the proposed model.
Intuitively, the dictionary and model parameters can be identified if the examinees  from different classes have distinct behaviors. To be more specific, two examinees may have different speeds to solve the item and their strategies (event patterns) should be different. 
In the following, we mathematically investigate this problem and identify the conditions under which the model becomes identifiable.   

We use $\mathcal O$ to denote the set of all possible sentences generated by $\mathcal D$, that is,
$\mathcal O = \bigcup_{j = 1}^J \mathcal O_j$ where $\mathcal O_j$ is the set of sentences generated by the pattern set of Class $j$, $\mathcal D_j =\{w: \theta_{j w} \neq 0\}$.
We define the set
$$\mathfrak{P} \equiv \{(\mathcal D, \{\theta_{jw}\}, \{\lambda_j\}, \pi, \kappa)\big| \mathcal D \in \mathbb D, \theta_{jw} \in [0,1], \lambda_j \in \mathbb R^{+}, \pi \in \mathcal S_J^{+}, \kappa \in \mathbb R^{+} \},$$
where $\mathbb D = \{\mathcal D ~ | ~ \mathcal D \textrm{ satisifes  A1 and A2 given below}\}$ and $\mathcal S_J^{+} = \{\pi \mid \pi > 0 \textrm{ and } \|\pi\|_1 = 1 \}$. 
We use $\mathcal P$ to denote an LTDM which depends on $(\mathcal D, \{\theta_{jw}\}, \{\lambda_j\}, \pi, \kappa)$. 
In the sequel, we will omit $(\mathcal D, \{\theta_{jw}\}, \{\lambda_j\}, \pi, \kappa)$ and use $\mathcal P$ when there is no ambiguity.
%use $\mathcal P$ for notational simplicity.

%We now introduce the concept of equivalence classes.
%Define projection function $p[\cdot]$: $p[j_1] = p[j_2]$ if $\lambda_{j_1} = \lambda_{j_2}$.
We say classes $j_1$ and $j_2$ are equivalent if $\lambda_{j_1} = \lambda_{j_2}$. 
We define the set of equivalence classes as $[j] = \{j_1 ~ | ~ j_1 \in \{1, \ldots, J\}, ~  j_1 ~ \textrm{and}~ j~\textrm{are equivalent.}  \}$.
Let $\mathcal D_{[j]}$ be the pattern dictionary of equivalence class $[j]$ and $\mathcal O_{[j]}$ be the set of all possible sentences generated by $\mathcal D_{[j]}$. 

\begin{enumerate}
	\item[A1] 
	For any class $j$ and any event $e$,
	it holds that $E \in \mathcal O_{j}$ if $E \in \mathcal O_{[j]}$ contains a subsentence $E_1 \in \mathcal O_{j}$ with $n_{j,e}$ consecutive events in set $\mathcal E - \{e\}$. 
	Here $n_{j,e}$ is the length of longest sentence in $\mathcal O_{j}$ without $e$.
%	\item[A1] 
%	For each equivalence class $[j]$ and any pattern $w = [e_1 ~ \ldots ~ e_{l_w}]\in \mathcal D_{[j]}$,
%	there exists $j_0 \in [j]$ such that (1) $w \in \mathcal D_{j_0}$; (2) $E \in \mathcal O_{j_0}$ for any $E \in \mathcal O_{[j]}$ containing subsentence $E_1 \in \mathcal O_{j_0}$ with $n_{[j],w}$ consecutive events in set $\mathcal E - \{e_1\}$. 
%	(Here $n_{[j],w}$ is the length of longest sentence in $\mathcal O_{j_0}$ without $e_1$.)
	\item[A2] For every pattern $w = [e_1 ~ e_2 \ldots ~ e_{l_w}]$ in dictionary $\mathcal D$, $e_1, \ldots, e_{l_w}$ are distinct if $l_w \geq 2$.
\end{enumerate}
%We first comment on the above assumptions. 
Assumption A1 is a technical condition to ensure that the patterns from distinct classes could be identified. 
This assumption is satisfied automatically in many cases such as (1) entries of $\{\lambda_j\}$ are different or (2) $\theta_{jw} > 0$ for all $j, w$.
Assumption A2 essentially restricts the dictionary in such a way that each pattern consists of distinct events. Clearly it is very easy to check. It is also natural in the sense that we do not want to treat too many replicated events as a pattern. 
For example, if we have two dictionaries $\mathcal D_1 = \{A, AA, AAA, AAAAAA\}$ and $\mathcal D_2 = \{A, AA, AAAA, AAAAA\}$, then they would generate the same sentence set (i.e., $\mathcal O_1 = \mathcal O_2$). Thus Assumption $A_2$ is necessary for dictionary identifiability.
We now introduce the formal definition of identifiability for LTDM.

\begin{definition}
	We say $(\mathcal D^{\ast}, \{\theta_{jw}^{\ast}\}, \{\lambda_j^{\ast}\},\pi^{\ast},\kappa^{\ast}) \in \mathfrak{P}$, is identifiable,
	%$(\mathcal D^{\ast}, \{\theta_{jw}^{\ast}\}_{j=1}^J, \lambda^{\ast}, \pi^{\ast}, \gamma_a^{shp \ast}, \gamma_a^{rte \ast}, \gamma_b^{shp \ast}, \gamma_b^{rte \ast})$ 
	if for any $(\mathcal D^{'}, \{\theta_{jw}^{'}\}, \{\lambda_j^{'}\},\pi^{'},\kappa^{'}) \in \mathfrak{P}$ 
	%$$\mathcal P^{'} = (\mathcal D^{'}, \{\theta_{jw}^{'}\}_{j=1}^J, \lambda^{'}, \pi^{'}, \gamma_a^{shp '}, \gamma_a^{rte '}, \gamma_b^{shp '}, \gamma_b^{rte '}) \in \mathfrak{P}$$
	that satisfies
	\begin{eqnarray*}
		P(e_{1:N}, t_{1:N} |\mathcal D^{'}, \{\theta_{jw}^{'}\}, \{\lambda_j^{'}\}, \pi^{'}, \kappa^{'}) 
		&=& P(e_{1:N}, t_{1:N} |\mathcal D^{\ast}, \{\theta_{jw}^{\ast}\}, \{\lambda_j^{\ast}\}, \pi^{\ast}, \kappa^{\ast}) 
 	\end{eqnarray*}
	for all $e_{1:N}$ and $t_{1:N}$, and
	$\mathcal O^{'} = \mathcal O^{\ast}$, $|\mathcal D^{'}| \leq |\mathcal D^{\ast}|$,
	we must have
	\begin{eqnarray*}
		\mathcal D^{'} = \mathcal D^{\ast}, \quad \kappa^{'} = \kappa^{\ast}, \quad \textrm{and} \quad
		(\{ \theta_{jw}^{'}\}, \{\lambda_j^{'}\}, \pi^{'}) \overset{p}{=} (\{\theta_{jw}^{\ast}\}, \{\lambda_j^{\ast}\}, \pi^{\ast}).
	\end{eqnarray*}
	%$\mathcal D^{'} = \mathcal D^{\ast}$, $(\lambda^{'}, \gamma_a^{shp '}, \gamma_a^{rte '}, \gamma_b^{shp '}, \gamma_b^{rte '} ) = (\lambda^{\ast}, \gamma_a^{shp \ast}, \gamma_a^{rte \ast}, \gamma_b^{shp \ast}, \gamma_b^{rte \ast})$
	%and
	%$(\{ \theta_{hw}^{'}\}, \pi^{'}) \overset{p}{=} (\{\theta_{hw}^{\ast}\}, \pi^{\ast})$.
\end{definition}
%Here, $\mathfrak{P}$ is the model space which is defined to be
%$$\mathfrak{P} \doteq \{(\mathcal D, \{\theta_{hw}\}, \lambda, \pi, \gamma_a^{shp}, \gamma_a^{rte}, \gamma_b^{shp}, \gamma_b^{rte})\big| \mathcal D \textrm{ is a dictionary}, \theta_{hw} \in (0,1), \pi \in \mathcal S_J^{+}, \lambda, \gamma_a^{shp}, \gamma_a^{rte}, \gamma_b^{shp}, \gamma_b^{rte} \in \mathbb R^{+} \}.$$
Here we use superscript $\ast$ to denote the true model (parameters/dictionary). 
$A \overset{p}{=} B$ means $A$ equals $B$ up to a permutation of class labels. 
%Consider space
%$$\mathfrak{P} \doteq \{(\mathcal D, \{\theta_{hw}\}, \lambda, \pi, \gamma_a^{shp}, \gamma_a^{rte}, \gamma_b^{shp}, \gamma_b^{rte})\big| \mathcal D \textrm{ is a dictionary}, \theta_{hw} \in (0,1), \pi \in \mathcal S_J^{+}, \lambda, \gamma_a^{shp}, \gamma_a^{rte}, \gamma_b^{shp}, \gamma_b^{rte} \in \mathbb R^{+} \},$$
%it is somewhat large.

We want to point out that in general $\mathfrak P$ is too large to be identifiable without additional constraints.
It is thus worth specifying a restricted space $\mathfrak{P}^{0} \subset \mathfrak{P}$ such that every model dictionary and parameter in $\mathfrak{P}^{0}$ is identifiable. 
Given Conditions C1 and C2 as specified in Appendix A,
we define 
$$\mathfrak{P}^{0} := \{(\mathcal D, \{\theta_{jw}\}, \{\lambda_j\}, \pi, \kappa ) \big| (\mathcal D, \{\theta_{jw}\}, \{\lambda_j\}, \pi, \kappa ) \in \mathfrak P ~ \textrm{and satisifes C1, C2}\} \}.$$
We have the following theorem.
\begin{theorem}\label{main}
	Under Conditions C1 and C2, every $(\mathcal D, \{\theta_{jw}\}, \{\lambda_j\}, \pi, \kappa )$ in $\mathfrak{P}^{0}$ is identifiable.
\end{theorem}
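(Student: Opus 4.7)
The plan is to reduce the joint identifiability claim to four essentially independent sub-identifications that follow the hierarchical structure of LTDM: first the sentence-count parameter $\kappa$, then the class-specific rates $\{\lambda_j\}$, then the dictionary $\mathcal D$, and finally the pattern probabilities $\{\theta_{jw}\}$ together with the mixing weights $\pi$. Each stage is carried out conditionally on what has been recovered in earlier stages, and Conditions C1--C2 are invoked only at the last stage.

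First I would read off $\kappa$ from the marginal law of $K$, which by assumption is $\mathrm{Pois}(\kappa)$ independent of everything else and is directly observable as the number of sentences into which $(e_{1:N}, t_{1:N})$ is split. Next, conditioning on a sentence $E_k$ of length $n_k$, the gap times $\tilde t_{k,1},\ldots,\tilde t_{k,n_k}$ are, given the latent class $z$, i.i.d.\ $\mathrm{Exp}(\lambda_z)$ by \eqref{ass:time}. Marginalising over $z$ yields a finite mixture of exponential samples, and standard identifiability of finite mixtures of exponentials recovers the distinct values in $\{\lambda_j\}$ together with their aggregate weights. This already determines the equivalence classes $[j]$ and lets us partition the observed sentences by their $\lambda$-membership.

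For the dictionary, I would combine the hypothesis $\mathcal O' = \mathcal O^*$ with A2 to first fix the length-$1$ atoms of $\mathcal D$ from the single-event sentences, then peel longer patterns off inductively in length. A2 guarantees that parsings $S \in \mathcal F(E)$ have only finitely many candidates with distinct events inside each pattern, so equations of the form \eqref{prob_E}--\eqref{ass:S} can be resolved in the shorter-pattern parameters once a longer pattern is added. A1 is the key separation device inside a single $\lambda$-equivalence class: for every event $e$ and class $j$, A1 forces any $\mathcal O_{[j]}$-sentence containing a long enough $e$-free subsentence to actually lie in $\mathcal O_j$, which lets us manufacture witness sentences that isolate the contribution of an individual class. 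Once $\mathcal D$ is pinned down, the constraint $|\mathcal D'| \le |\mathcal D^*|$ promotes inclusion to equality.

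With $\kappa$, $\mathcal D$, and the grid of $\lambda_j$ in hand, the sentence-level conditional law $P(E_k, \tilde T_k \mid z)$ is a finite mixture over $z$, and local independence \eqref{cond_ind0} across sentences supplies the factorisation needed to apply Kruskal's theorem to the three-way array formed by three distinct sentences of a single subject. The Kruskal-rank hypotheses absorbed into C1--C2 then deliver $(\{\theta_{jw}\}, \{\lambda_j\}, \pi)$ up to a permutation of class labels, completing the proof. The step I expect to be genuinely hard is the inductive dictionary peeling in Stage 3, and in particular separating two classes $j_1, j_2$ that share a common $\lambda$: there the time channel conveys nothing, the whole burden falls on A1, and the delicate point is to build witness sentences whose probability under one class is forced to vanish while remaining positive under the other, without opening up alternative parsings by patterns drawn from the other class's dictionary.
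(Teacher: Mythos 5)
Your stagewise skeleton ($\kappa$, then $\{\lambda_j\}$, then $\mathcal D$, then $\{\theta_{jw}\},\pi$) matches the paper, and stages 1--2 are essentially the paper's argument (the paper reads $\kappa$ off $P(K=0)$ and peels the $\lambda$'s off by letting $t\to\infty$ in the $K=1$ likelihood, i.e.\ the exponential-mixture decay-rate argument you invoke). For the dictionary stage, be aware that the hypothesis $\mathcal O'=\mathcal O^{\ast}$ is not enough by itself: the paper first matches the coefficients of the distinct exponential rates in the $K=1$ likelihood to obtain the per-equivalence-class identity $\sum_{j_1\in[j]}\pi_{j_1}P(E\mid j_1)=\sum_{j_1\in[j]}\pi'_{j_1}P'(E\mid j_1)$, hence $\mathcal O_{[j]}=\mathcal O'_{[j]}$ for every equivalence class, and only then applies a dedicated combinatorial lemma (Lemma~2: under A1--A2, $\mathcal O_{[j]}=\mathcal O'_{[j]}$ iff $\mathcal D_{[j]}=\mathcal D'_{[j]}$), whose proof is exactly the witness-sentence construction you anticipate ($\check O_j$, $\grave O_j$, and the sentence $(\grave O_j\, w)$). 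Your ``partition the observed sentences by their $\lambda$-membership'' is the right instinct but has to be made precise in this coefficient-matching form, since class membership is latent; you correctly flagged this as the crux.

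The genuine gap is in your final stage. You propose applying Kruskal to the three-way array formed by three distinct sentences of one subject and claim the rank hypotheses are ``absorbed into C1--C2.'' They are not: C1.a/C1.b are tailored to a \emph{within-sentence} tensor. The paper takes $E=(e_1,e_2,e_3)$ with $e_k\in\mathcal I_{[j],k}$, uses the unique-separation requirement of C1.a so the sentence probability factorizes as $\eta_j\varphi_{je_1}\varphi_{je_2}\varphi_{je_3}$, and applies Kruskal to $[\bar T_1,\bar T_2,\bar T_3]$; this delivers only $\pi$ (through $\eta_j$) and the 1-gram probabilities. The multi-gram parameters $\theta_{jw}$, $l_w\ge 2$, are then recovered by a separate induction on pattern length in which C2.a/C2.b reduce the problem to a linear system $\bar T_w\tilde\varphi_w=\mathbf 0$ with full column rank, forcing $\tilde\varphi_w=\mathbf 0$. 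Your proposal never uses C2 in any concrete way and never explains how Kruskal's output would yield $\theta_{jw}$ for $l$-grams. If you insist on the across-sentence (Allman--Matias--Rhodes style) array, you would additionally need to (i) first reduce to a per-equivalence-class mixture via the exponential-rate separation, since within an equivalence class the time channel is uninformative while across classes the conditional laws are not exchangeable in the way your array assumes; (ii) justify a column-rank condition for the class-conditional sentence-distribution matrix restricted to a finite set of sentences (this can be deduced from C1.b using the single-event rows, but it is an argument you must supply, not a restatement of C1.b); and (iii) add a peeling step that converts the recovered class-conditional sentence distributions into the $\theta_{jw}$'s for $l_w\ge2$. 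As written, the step ``Kruskal then delivers $(\{\theta_{jw}\},\{\lambda_j\},\pi)$'' does not follow from the stated conditions.
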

One immediate result as stated in Corollary \ref{cor:1class} is that  there are no two distinct dictionaries which have the same $\mathcal O$ if they satisfy Assumption A2. 
This also serves as a sufficient condition for the identifiability of TDMs, since a TDM could be non-identifiable without any additional assumption.

\begin{corollary}\label{cor:1class}
	For the 1-class case, if $\mathcal D$ and $\mathcal D^{'}$ satisfy Condition A2, then $\mathcal O = \mathcal O^{'}$ if and only if $\mathcal D = \mathcal D^{'}$.
\end{corollary}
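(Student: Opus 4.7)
The plan is to view Corollary~\ref{cor:1class} as the $J=1$ specialization of Theorem~\ref{main}. First I would check that in the 1-class case the ingredients of the main theorem collapse to the statement at hand: with $J=1$ we have $\pi \equiv 1$, a single intensity $\lambda$, the equivalence class $[1]=\{1\}$, and $\mathcal D_{[1]}=\mathcal D_1=\mathcal D$, so Conditions~C1 and~C2 of Appendix~A either become vacuous or reduce to A2. Consequently the identifiability conclusion of Theorem~\ref{main} becomes precisely the statement $\mathcal D=\mathcal D'$ claimed by the corollary (the equalities of $\kappa$, $\pi$ and $\lambda$ being automatic in the 1-class setting and the $\{\theta_{1w}\}$ part playing no role for the purely $\mathcal O$-level assertion).

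The real combinatorial content is that $\mathcal O=\mathcal O'$ together with A2 for both dictionaries forces $\mathcal D=\mathcal D'$. I would prove this by induction on the pattern length $\ell$. The base case $\ell=1$ is immediate: a single-event sentence $(e)$ admits only the one separation $[e]$, so $(e)\in\mathcal O$ iff $[e]\in\mathcal D$, giving agreement of length-$1$ patterns. For the inductive step, assume agreement for all lengths strictly below $\ell$ and take $w=[e_1\,e_2\,\cdots\,e_\ell]\in\mathcal D$; by A2 its events are pairwise distinct. Since $w\in\mathcal O=\mathcal O'$, pick a $\mathcal D'$-separation $(w_1',\ldots,w_r')$ of $w$. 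If $r=1$, then $w\in\mathcal D'$ and we are done. If $r\ge 2$, the inductive hypothesis places each $w_j'\in\mathcal D$; to force $w\in\mathcal D'$ one builds a witness sentence $\tilde E\in\mathcal O$ (for instance by flanking $w$ with suitably chosen length-$1$ patterns, or with a second copy of $w$) whose every $\mathcal D'$-separation must take $w$ as a single block, yielding $w\in\mathcal D'$. Symmetry then gives $\mathcal D'\subseteq\mathcal D$ and closes the induction.

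The main obstacle is the construction of $\tilde E$ in the inductive step. Without A2 one has the degenerate examples $\{A,AA,AAA,AAAAAA\}$ versus $\{A,AA,AAAA,AAAAA\}$ from the discussion of A2 in which distinct dictionaries generate the same $\mathcal O$. Assumption~A2 is what breaks such symmetries, because it forces the events inside any pattern of length $\ge 2$ to be distinct; this distinctness pins down where pattern boundaries can lie inside $\tilde E$ and precludes alternative $\mathcal D'$-separations that avoid $w$. Making this combinatorial rigidity precise---that is, choosing $\tilde E$ so that A2 leaves $w$ as the only viable $\mathcal D'$-block covering the middle of $\tilde E$---is the heart of the proof.
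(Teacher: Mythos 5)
Your identification of the combinatorial core (that $\mathcal O=\mathcal O'$ plus A2 must force $\mathcal D=\mathcal D'$) and of the general strategy (build a witness sentence whose every $\mathcal D'$-separation contains $w$ as a single block) matches the paper in spirit, but there is a genuine gap: the witness construction, which you yourself flag as ``the heart of the proof,'' is precisely what the paper supplies and your sketch does not. The paper derives the corollary from Lemma \ref{keylem}, and for a pattern $w=[e_1\,e_2\cdots e_{l_w}]\in\mathcal D$ the witness is not obtained by flanking with 1-grams or duplicating $w$: it is $(\grave O\, w)$, where $\grave O$ is a sentence built from whole patterns of $\mathcal D$ of the form $(Q_1 Q_2)$ with $Q_1$ avoiding the event $e_1$, containing a longest $e_1$-free sentence $\check O$ (of length at least $n_{1,e_1}$) and being as long as possible, and $Q_2$ a shortest possible block starting with $e_1$. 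This extremality is what powers the case analysis: if some $\mathcal D'$-separation of $(\grave O\, w)$ had a pattern ending strictly inside $w$, or a pattern swallowing $w$ into a longer block, one could manufacture a new sentence of $\mathcal O$ that either extends the $e_1$-free part beyond its maximal length or forces a single pattern to contain $e_1$ twice, contradicting the choice of $\grave O$ or Assumption A2. None of this rigidity argument appears in your proposal, so the inductive step is not actually carried out.

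Two further points. First, the concrete witnesses you float would fail as stated: a sentence made of two copies of $w$ need not lie in $\mathcal O$, since a separation contains each dictionary pattern at most once (the indicator $\mathbf 1_{\{w\in S\}}$ in \eqref{ass:S}), and flanking with length-1 patterns presupposes that the required 1-grams are in $\mathcal D$, which is not assumed. Second, the corollary is not really the $J=1$ specialization of Theorem \ref{main}'s probabilistic identifiability; it is the purely combinatorial Lemma \ref{keylem} applied to a single equivalence class, where A1 holds vacuously. Once the witness construction is available, your induction on pattern length is also unnecessary: the paper's argument handles every $w$ of length at least 2 directly.
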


%We further assume the following condition on the true model parameters.
%\begin{enumerate}
%	\item[C3] $\theta_{jw}^{*} \in (0,1)$ and $\pi_j^{\ast} > 0$ for $j = 1,\ldots,J$.
%\end{enumerate}
%Condition C3 is natural since that size of each class should not be zero and the true parameter is the interior point of the parameter space.

Suppose that the number of latent classes $J$ and dictionary size $v_D$ are known.
The true dictionary and model parameters can be estimated consistently. The results are stated as follow.
\begin{theorem}\label{consistent}
	Define the maximum likelihood estimator
	\begin{eqnarray*}
		(\hat{\mathcal D}, \{\hat \theta_{jw}\}, \{\hat \lambda_j \}, \hat \pi, \hat \kappa ) = \textrm{argmax}_{(\mathcal D,\{\theta_{jw}\}, \{\lambda_j\}, \pi, \kappa) \in \mathfrak P_c} \prod_{i=1}^m L(e_{1:N_i}, t_{1:N_i}).
	\end{eqnarray*}
	where %$L(\boldsymbol O_i, \boldsymbol T_i)$ is the marginal likelihood and
	$\mathfrak P_c := \mathbb D_c \times \boldsymbol \Theta_c$; $\mathbb D_c \subset \mathbb D$ is the set of dictionaries with size smaller than $v_D$ and $\boldsymbol \Theta_c$ is any compact subset containing the true parameter vector.
%	$\boldsymbol \Theta_c$ is a compact set such that $\boldsymbol \Theta_c$ = \{$\theta_{jw} \in [\eta, 1 - \eta], \pi_j \in [\eta, 1 - \eta], \sum_j \pi_j = 1, \lambda_j \in [c, C]$, $\kappa \in [c, C]$\}; $\eta$, $c$, $C$ are some positive constants such that true model parameter is in $\Theta_c$. 
	Then, under Conditions C1 and C2, we have that
	$$P (\hat{\mathcal D} = \mathcal D^{\ast}) \rightarrow 1 $$
	and, for some permutation function $\rho$,
	$$ P\bigg(|\hat \kappa - \kappa^{\ast}| < \delta, \| \rho(\hat \lambda_{j}) - \lambda_{j}^{\ast}\|_2 < \delta, \| \rho(\hat \theta_{jw}) - \theta_{jw}^{\ast}\|_2 < \delta, \|\rho(\hat \pi) - \pi^{\ast} \|_2 < \delta \bigg) \rightarrow 1$$
	for any $\delta > 0$ as $m \rightarrow \infty$.
\end{theorem}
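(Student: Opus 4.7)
The plan is a Wald-type MLE consistency argument, using Theorem \ref{main} to pin down the unique maximizer (modulo class-label permutation) of the population log-likelihood, compactness of $\mathfrak P_c$ for a uniform law of large numbers, and finiteness of $\mathbb D_c$ to handle the discrete dictionary component.

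Writing $\eta = (\mathcal D, \{\theta_{jw}\}, \{\lambda_j\}, \pi, \kappa)$ and $\eta^*$ for the truth, I would first introduce the population criterion
$$M(\eta) := \mathbb E_{\eta^*}\bigl[\log L(e_{1:N}, t_{1:N}; \eta)\bigr]$$
and its empirical counterpart $M_m(\eta) := m^{-1}\sum_{i=1}^m \log L(e_{1:N_i}, t_{1:N_i}; \eta)$. The Kullback--Leibler inequality gives $M(\eta) \leq M(\eta^*)$ with equality iff $P_\eta \equiv P_{\eta^*}$, and Theorem \ref{main} upgrades this to: the maximizers of $M$ over $\mathfrak P_c$ are exactly the finitely many permutation copies of $\eta^*$, with $\mathcal D = \mathcal D^*$ and $\kappa = \kappa^*$ pinned exactly.

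Next, because $\mathbb D_c$ is finite (the event set $\mathcal E$ is finite and pattern length is bounded), it suffices to prove, for each fixed $\mathcal D \in \mathbb D_c$, that
$$\sup_{(\theta,\lambda,\pi,\kappa) \in \boldsymbol \Theta_c}\bigl|M_m(\eta) - M(\eta)\bigr| \longrightarrow 0$$
in probability. Choosing $\boldsymbol \Theta_c$ compact and contained in the interior (so that $\theta_{jw}$ and $\lambda_j$ stay away from the boundary), the log-likelihood \eqref{likli} is continuous in $(\theta,\lambda,\pi,\kappa)$ and admits an integrable envelope, so a standard Glivenko--Cantelli/bracketing argument yields the ULLN; taking a maximum over the finite $\mathbb D_c$ preserves it. Combining uniform convergence with the unique-maximizer property yields: for any $\delta > 0$, the MLE $\hat\eta$ eventually lies in a $\delta$-neighborhood of some permutation copy of $\eta^*$, which translates directly into both convergence statements of the theorem.

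The main obstacle is upgrading the pointwise positivity $M(\eta^*) - M(\eta) > 0$ (a direct consequence of identifiability whenever $\eta$ is not a permutation copy of $\eta^*$) to a uniform strictly positive gap over $\mathfrak P_c \setminus U_\delta$, where $U_\delta$ denotes the union of $\delta$-neighborhoods of the permutation copies of $\eta^*$. For each wrong dictionary this follows from continuity of $\eta \mapsto M(\eta)$ on the compact slice $\{\mathcal D\} \times \boldsymbol \Theta_c$; for the correct dictionary $\mathcal D^*$, one must further verify that $\delta$ can be chosen small enough that the finitely many permutation copies of $\eta^*$ are pairwise separated and that each is a strict local maximum, which again follows from identifiability together with continuity of $M$.
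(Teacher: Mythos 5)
Your treatment of the parameter component is essentially the paper's own argument: a Wald-type scheme in which identifiability (Theorem \ref{main}) plus compactness and continuity of the population criterion give a strictly positive gap outside a $\delta$-neighborhood of the permutation copies of the truth, and a uniform law of large numbers transfers this to the empirical criterion (the paper implements the ULLN by Bernstein's inequality plus a Lipschitz bound on a truncation event $\{K_i \le K_0, \tilde t_{ik,u}\le t_0\}$ and a covering-number step, which is interchangeable with your bracketing/envelope route). So that half is fine in spirit.

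The genuine gap is in how you handle the dictionary component. You propose to run the same ULLN-plus-KL-gap argument on each slice $\{\mathcal D\}\times\boldsymbol\Theta_c$ and then use finiteness of $\mathbb D_c$, claiming an integrable envelope for $\log L$ on every slice. This fails for any candidate dictionary $\mathcal D$ whose generated sentence set does not cover $\mathcal O^{\ast}$: such a dictionary assigns probability zero to a sentence that occurs with positive probability under the truth, so $\log L$ equals $-\infty$ with positive probability, $M(\eta)=-\infty$, and neither the integrable-envelope Glivenko--Cantelli step nor the ``uniform gap by continuity of $M$ on the compact slice'' step is available. (Note also that your use of Theorem \ref{main} to characterize the maximizers of $M$ presupposes $\mathcal O=\mathcal O^{\ast}$ and $|\mathcal D|\le|\mathcal D^{\ast}|$, so the support-deficient dictionaries must be eliminated by some other means before identifiability can be invoked.) The paper closes exactly this hole with a separate support argument: every sentence $E\in\mathcal O^{\ast}$ has probability bounded below by $\eta^{v_D}$, so on an event $\Omega_D$ of probability tending to one each such sentence is observed at least once; on $\Omega_D$ any support-deficient dictionary has likelihood zero and cannot be the argmax, and among the remaining candidates the dictionary identifiability (Lemma 2) forces $\hat{\mathcal D}=\mathcal D^{\ast}$ exactly, which is what yields $P(\hat{\mathcal D}=\mathcal D^{\ast})\to 1$. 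You need to add an argument of this type (or truncate/penalize the criterion) before your finitely-many-slices reduction is legitimate; for dictionaries whose generated set strictly contains $\mathcal O^{\ast}$ your KL-gap argument does work, provided $\boldsymbol\Theta_c$ keeps the $\theta_{jw}$ away from $0$ and $1$, an assumption the paper makes explicitly and you should state as well.
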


We would like to point out that if we only consider the event sentences and ignore the times, the above results still hold since all classes are in a single equivalence class.

\section{Computation \label{sec:4}}

Although LTDM postulates a parametric form,
we do not know the size of the true dictionary ($v_D$) and the number of latent classes ($J$) in practice.
Therefore, three challenges remain in terms of computation, namely, (1) finding the true underlying patterns (construction of dictionary), (2) clustering people into the right groups, and (3) computational complexity. We propose a new nonparametric Bayes - LTDM (NB-LTDM) algorithm as described below to address these issues. 

\begin{itemize}
	\item[] \textbf{NB-LTDM Algorithm}
	\item[] \textbf{Initialization}:  Randomly choose a large $J$; sample personal latent labels $z_i$ from the uniform distribution on $\{1,\ldots,J\}$; sample parameters $\{\theta_{jw}\}$ uniformly on [0,1]; sample $\pi$ from the Dirichlet distribution; sample $\{\lambda_j\}$ and $\kappa$ from $\mathrm{exp}(1)$. The initial dictionary $\mathcal D^{(0)}$ should include all $M_1$ 1-grams and a random selection of $S_0$ $l$-grams ($l = 2, \ldots, L$).
	
	\item[] \textbf{Output}: $J^{\ast}$ - the number of classes, $\mathcal D^{\ast}$ - the dictionary, estimates of model parameters.
	
	\item[] The algorithm takes the following iterative steps until the Markov chain becomes stable.
	\begin{itemize}
		\item[1] [\textbf{Search}] Within each latent class, we calculate the frequency of $l$-grams based on count. We find the $S$ most frequent $l$-grams ($l = 2, \ldots, L$) which do not appear in the current dictionary and add them into $\mathcal D$.
		\item[2] \textbf{[Split]} Split the event sequences according to the current dictionary.
		\item[3] \textbf{[Sample]} Sample separation for each event sequence from the corresponding possible candidates.
		\item[4] \textbf{[Inner part]} Use slice Gibbs \citep{walker2007sampling}
		%  literature
		sampling schemes to iteratively update the following variables:
		\begin{itemize}
			\item[] Model parameters $\{\theta_{jw}\}, \{\lambda_j\}$ and $ \kappa$, augmented variables, separations $\{S_{ik}\}$, latent labels $\{z_i\}$ and the prior parameters.
		\end{itemize}
		%\item[6] Prune dictionary: for each action words $w$ in current dictionary, calculate the evidence probability $\beta_w = \max_{h \in \mathcal A} \beta_{hw}$. Sort $w$ in decreasing order based on $k$-grams $(k = 2, \ldots, K)$ and keep the first $M_k$ words while discarding the remaining ones.
		\item[5] [\textbf{Trim dictionary}] For each action pattern $w$ in the current dictionary, calculate the \textit{evidence probability} $\beta_w = \max_{j} \theta_{jw}$. Discard those patterns with evidence probability smaller than $\tau$.
	\end{itemize}
\end{itemize}

We set threshold $\kappa_m = \frac{1}{\sqrt{m}}$ and estimate the number of latent classes by $J^{\ast} = \#\{h | \pi_h > \kappa_m, h = 1,2,\ldots \}$. The estimated pattern dictionary is $\mathcal D^{\ast} = \{w | w ~ \textrm{is in dictionary at least half of time in the last 100 iterations.}\}$.  We use posterior means for other parameters.

We comment on the tuning parameters in the proposed algorithm:
$\tau$ is a threshold to filter out less frequent patterns; 
$S_0$ is the number of $l$-grams ($l = 2, \ldots, L$) in the initial dictionary;
$S$ controls the number of new patterns added into current dictionary.
We found in our simulation studies that the proposed method is not sensitive to the choices of $S$ and $S_0$. In practice, we may choose $S = 2 M_1$ and $S_0 = M_1$.

%The contribution of NB-LTDM algorithm is to find the underlying dictionary and number of latent classes based on data itself. 
%The proposed method 
This data-driven method consists of two main steps, updating the dictionary and updating the model parameters. 
\begin{itemize}
	\item[] \textbf{Update dictionary}: 
	In each loop, the algorithm trims the dictionary by keeping patterns with high evidence level and discarding those with weak signals.
	Then it finds patterns (2-grams, \ldots, $L$-grams) with high frequencies within each latent class and adds new patterns to the current dictionary.
	This step can be viewed as a forward-backward-type variable selection \citep{tibshirani1997lasso, borboudakis2019forward} technique for dictionary update.
	Compared with the full Bayesian methods \citep[e.g. spike-and-slab prior;][]{ishwaran2003detecting, ishwaran2005spike} for dictionary selection, it can result in substantial reduction in the computational time.
	\item[] \textbf{Update parameters}:
	To update the model parameters, we follow the approach of \cite{dunson2009nonparametric}.
	In our specific setting, for a given dictionary, the parameters are updated using a Markov Chain Monte Carlo (MCMC) method together with the slice sampler.
	It allows us to avoid directly computing the marginal likelihood that requires massive computation in terms of integration of latent variables $\{z_i\}_{i=1}^m$ and $\{\mathbf S_i\}_{i=1}^m$.
	We use a stick-breaking prior \citep{sethuraman1994constructive} on latent class probabilities to avoid specifying a priori number of classes $J$.
	For precise mathematical formulation and updating rules in the inner part, see Appendix C.
\end{itemize}
Here we want to point out that we are not able to develop theoretical results for convergence analysis. However, the proposed method performs well in our simulation studies.
%and makes classical frequentest method like expectation and maximization (EM) algorithm \citep{dempster1977maximum} infeasible.

\section{Simulation Studies \label{sec:5}}

The simulation studies include four different simulation settings, which are specified below.

\begin{enumerate}
	\item In the first simulation setting, dictionary $\mathcal D$ consists of $1$-grams, $2$-grams and $3$-grams, with details given in Table \ref{sim1}. We set $v_D = 50$, $L = 3$, $M_1 = 20, M_2 = 20, M_3 = 10$ and set $m = 1000$, $J = 5$. Other model parameters are set as follow, $\pi = (0.4, 0.3, 0.2, 0.05, 0.05)$, $\{\lambda_j\} = (10,2.5,1,0.5,0.2)$ and $\kappa = 10$.  Pattern probability $\{\theta_{jw}\}$ is provided in Table \ref{sim1}.
	
	\tablehere{3}
	
	Under this setting, it can be verified that the model dictionary and parameters are identifiable:
	A1 is satisfied since $\lambda_j$'s are different; 
	A2 holds by the construction of dictionary;
	C1 and C2 are satisfied as the size of each equivalence class is 1. 
	
	\item 
	In the second simulation setting, dictionary $\mathcal D$ consists of $1$-grams, $2$-grams and $3$-grams,
	with details given in Table \ref{sim2}. We set $v_D = 50$, $L = 3$, $M_1 = 20, M_2 = 20, M_3 = 10$ and set $m = 1000$, $J = 6$. Other model parameters are set as follows, $\pi = (0.2,0.2,0.2,0.2,0.1,0.1)$, $\{\lambda_j\} = (0.2,4,0.2,4,1,1)$ and $\kappa = 10$.  Pattern probability $\{\theta_{jw}\}$ is provided in Table \ref{sim2}.
	
	\tablehere{4}
	
	Under this setting, the model dictionary and parameters are identifiable. 
	It is easy to see that there are four equivalence classes, $[1]$, $[2]$, $[5]$ and $[6]$ where $[1] = \{1,3\}$ and $[2] = \{2,4\}$.
	Assumption A1 is satisfied by observing that pattern dictionaries of Class 1 (2) and Class 3 (4) do not overlap.
	Assumption A2 holds by the construction of dictionary.  
	We can construct a partition $\mathcal I_1 = \{1,2,\ldots,5\}, \mathcal I_2 = \{6, \ldots, 10\}, \mathcal I_3 = \{11, \ldots, 20\}$ such that sentence $(e_1, e_2, e_3)$ has only one separation for any $e_k \in \mathcal I_k$ ($k = 1, 2, 3$).
	It can be checked directly that the corresponding $T$-matrices have full column rank. Therefore C1 is satisfied.
    Condition C2 can also be verified similarly.
    
	\item
	In the third simulation setting, dictionary $\mathcal D$ includes patterns up to $4$-grams, with details provided in Table \ref{sim3}.
	We let $v_D = 90$, $L = 4$, $M_1 = 30, M_2 = 30, M_3 = 15, M_4 = 15$ and set $m = 2000$, $J = 5$. Other model parameters are set as follows, $\pi = (0.3, 0.3, 0.2, 0.1, 0.1)$, $\{\lambda_j\} = (10,2.5,1,0.5,0.2)$ and $\kappa = 10$. Pattern probability $\{\theta_{jw}\}$ is provided in Table \ref{sim3}.
	
	\tablehere{5}
	
	The model in this setting is also identifiable: A1 is satisfied since $\lambda_j$'s are different;
	A2 holds by noticing that there is no pattern with repeated actions;
	C1 and C2 are also satisfied automatically since the size of each equivalence class is 1.   
	
	\item
	In the fourth simulation setting, dictionary $\mathcal D$ includes patterns up to $3$-grams, with details provided in Table \ref{sim4}.
	We let $v_D = 50$, $L = 3$, $M_1 = 20, M_2 = 20, M_3 = 10$ and set $m = 1000$, $J = 5$. Other model parameters are set as follows, $\pi = (0.4, 0.3, 0.2, 0.05, 0.05)$, $\{\lambda_j\} = (1,1,1,1,1)$ and $\kappa = 10$. Pattern probability $\{\theta_{jw}\}$ is provided in Table \ref{sim4}.
	
	\tablehere{6}
	
	The model parameter in this setting is not identifiable: examinnes from Class 1 and Class 2 have almost the same pattern probabilities except for two patterns. Specifically, Class 1 does not have pattern $[1~2]$ and Class 2 does not have pattern $[2~3]$.
	Therefore, it fails to meet Condition C1. Thus we do not expect that all five classes can be recovered in this setting.
\end{enumerate}

We generate 50 datasets for each setting. To provide a more concrete sense of data, some descriptive statistics are given. The means of sentence length are 6.71, 6.89, 4.88 and 5.02 for Settings 1, 2, 3 and 4 respectively.
The maximum lengths of whole event sequences are around 178, 182, 135, 133 for Settings 1, 2, 3 and 4, respectively. 
The detailed procedures for generating the data sets are presented below.
\begin{itemize}
	\item[] \textbf{Data Generation Scheme}
	\item[] \textbf{Input}: $\mathcal D$, $m$, $\pi$, $\{\theta_{jw}\}$, $\lambda_{j}$ and $\kappa$.
	\item[] For $i = 1, \ldots, m$ do
	\begin{enumerate}
		\item Sample $z_i$ from the multinomial distribution with parameter $\pi$.
		\item Sample $K_i$ from the Poisson distribution with parameter $\kappa$.
		\item For $k = 1, \ldots, K_i$, do
		\begin{itemize}
			\item For each $w \in \mathcal D$, sample a indicator variable $u_w$ from the Bernoulli distribution with parameter $\theta_{z_i w}$.
			\item Randomly shuffle patterns in set $\{w | u_w = 1\}$ and get an ordered pattern sequence.
			\item Concatenate all patterns in above sequence and get the event sentence $E_{ik}$.
			\item Compute the length of $E_{ik}$, i.e., $n_{ik}$.
			\item Generate $T_{ik}$ recursively, such that $T_{ik,u} - T_{ik,u-1} \sim \textrm{exp}(\lambda_{z_i})$ for $u = 1, \ldots, n_{ik}$, where $T_{i1,0} = 0$ and $T_{ik,0} = T_{ik-1,n_{ik}} (k \geq 2)$.
		\end{itemize}
	\end{enumerate}
	\item[] \textbf{Output}: List of event sentences $\{E_{ik}\}$ and list of time sentences $\{T_{ik}\}$.
\end{itemize}

We set threshold $\tau = 1 / \sqrt{m}$ for each setting.
The performance of proposed model is evaluated through the following criteria.
\begin{itemize}
	\item[] \textit{Correct recovery}: percent of correctly identified patterns out of all true patterns.  
	\item[] \textit{False recovery}: percent of incorrectly identified patterns out of all identified patterns.
	\item[] \textit{$l$-gram hitting}: percent of correctly identified $l$-grams out of all true $l$-grams.
	\item[] \textit{Class recovery}: percent of recovering true number of latent classes.
	\item[] \textit{RMSE}: Root mean squared error of model parameters.
\end{itemize}

\tablehere{7}

\tablehere{8}

From Table \ref{simtab}, we can see that the proposed method can recover dictionary and model parameters well. The fact that ``correct recovery" is close to 1 and ``false recovery" is close 0 provides the empirical evidence that $\mathcal D$ is identifiable. The 2-grams, 3-grams are accurately recovered in all three settings. 
Their hitting rates are all close to 1.
The ``4-gram hitting" is also high in Setting 3. 
The estimates of mixing proportion $\pi$, response speed $\{\lambda_j\}$ and pattern probability $\{\theta_{jw}\}$ are close to their true values with small RMSEs.
These results provide the supporting evidence on the identifiability of model parameters.

Furthermore, in Settings 1 and 2, we compare the differences between results by fitting the model with/without times. 
Note that the model in Setting 1 remains estimable if times are ignored.
However, from Table \ref{simtab}, we can see the increase in RMSEs of parameters and the decrease in ``class recovery" when the times were taken out.  
This is consistent with the information theoretical results presented in Section 3.
In Setting 2, the true number of latent classes is not identifiable when we ignore the times, 
since Classes 1 and 2 are merged together into a single class, similarly for Classes 3 and 4.
The true $\lambda_j$'s are no longer estimable for Classes 1-4.
In fact, the estimator converges to its average value $\frac{2}{1/2 + 1/4} \approx 0.38$.
These results show that the inclusion of event times can lead to more accurate estimation and better identifiability. 
From Table \ref{simtab:set4}, under Setting 4, we can see that the proposed algorithm tends to find four classes instead of five classes (i.e. 66 percent of time the algorithm returns a four-class model). The estimate $\hat \pi$ indicates that Class 1 and Class 2 are merged together since we cannot distinguish between them. On the other hand, the underlying dictionary can still be recovered with high accuracy. 
 
From these results, we can see that the proposed algorithm is expected to recover the true latent classes and pattern structures when examinees from different classes have different patterns with different respond speeds.

\section{Real Data Analysis \label{sec:6}}

In this section, we apply the proposed model to the ``Traffic" item from PISA 2012 as described in Section 2.
The data were preprocessed as follows.
We removed those examinees who did not answer all three questions of the ``Traffic" item or did not take any actions, leaving 10048 remaining examinees.
In the raw data, each event corresponding to the map is a 0-1 vector with 23 entries.
Note that two consecutive vectors only differed at one position. We took their difference and represented event as the index on which the two consecutive vectors differ.
We view highlighting and unhighlighting as two different knowledge status of the examinee. As such, a sentence was defined as a subsequence of events where the examinee either consecutively highlighted roads or consecutively unhighlighted roads. A new sentence starts once the examinee changed from highlighting (unhighlighting) to unhighlighting (highlighting), or clicked ``reset".
The corresponding time sentence was defined accordingly.
An example of such data transformation is shown in Table \ref{cleandata}.
In our case, the observed data is $e_{1:N} = \{10, 8, 9, \ldots\}$ and $t_{1:N} = \{27.7, 28.6, 29.4, \ldots \}$. The corresponding observed event and time sentence sequences are $\{(10, 8, 9, 20), \ldots, (9, 8, 10)\}$ and  $\{ (27.7, 28.6, 29.4, 30.5), \ldots, (46.0, 47.7, 48.7)\}$. 
On average, each individual had about 10.4 sentences and clicked around 28.4 roads.

To apply the proposed method, we set $\tau = 1/\sqrt{10048}$ and $L = 3$ by observing that the correct meeting point is at most three roads away from each place marked in red.
Six classes were identified with the LTDM and were labeled in descending order according to their sizes.
Table \ref{result} provides a summary, such as mixing proportion, answer correct rate, average number of sentences (actions), etc., for the six classes.
The size of estimated dictionary was 82 with $M_1 = 23$, $M_2 = 39$ and $M_3 = 20$; see Appendix D for details.

\tablehere{9}

The fitted model appeared to satisfy the required assumptions and conditions. 
Assumption A1 was satisfied since the size of each equivalence was one by noticing that $\lambda_j$'s were significantly different from each other.
Assumption A2 was satisfied by observing that no pattern had repeated events.
Conditions C1 and C2 also held since the size of each equivalence class was one.

%Examinees in Class 1 had the highest proportion to answer question correctly and fewest number of sentences and actions;
%Examinees in Class 2 also had relatively high correct proportion and moderate number of sentences and actions;
%Examinees in Class 3 had similar number of sentences and actions as Class 2 with the lower correct proportion;
%Examinees in Class 4 had the moderate proportion to answer question correctly and fewer number of sentences and actions;
%Examinees in Class 5, with moderate correct proportion, took much more actions compared with examinees in other classes;
%Examinees in Class 6 had the lowest proportion to answer question correctly and large number of sentences and actions.
From Table \ref{result}, we can see that there was a substantial variation among the estimated intensities $\{\hat \lambda_j \}$.
Classes 2 and 3 had the highest response speed and the highest correct rate.
The examinees from Class 6 responded much slower compared to other classes but still had a decent chance to get the correct answer. It shows that they made efforts to solve this CPS item.
Examinees from Classes 4 and 5 solved the item with moderate speed. They had the lowest chance to get the correct answer.
The results indicate that the time information may be useful in characterizing examinees.

% from Figures \ref{fig:2gram} and \ref{fig:3gram},

\tablehere{10}

We next look at the most frequent patterns (2-grams, 3-grams) in Table \ref{pattern:explanation}.
Examinees in Classes 2 and 3 had a much higher chance to solve the item. They could successfully identify three paths connecting the correct meeting point ``Park" to three original places, ``Silver", ``Nobel" and ``Lincoln". Note that there are two paths connecting ``Silver" and ``Park" with less than 15 minutes, i.e. $[20~9]$ and $[21~14~22]$. Examinees in Class 3 identified the former path and Examinees in Class 2 identified the latter. Pattern $[20~9]$ is shorter, thus Class 3 was the most efficient class. 
Examinees in Class 4 barely used any most frequent 2-gram or 3-gram patterns. They did not appear to have a good strategy that led to a lower rate of solving the item successfully.  
Examinees from Class 5 behaved differently from other classes. 
They identified paths connecting ``Silver" to ``Lincoln" or ``Nobel" , i.e., $[10~8~20]$, $[3~12~21]$, $[20~8~10]$ and $[21~12~3]$. 
In other words, these examinees found the second correct meeting point ``Silver". But unfortunately, most of them failed to identify ``Park".
Hence, they have the lowest rate to answer the item correctly. 
Examinees from Class 1 or Class 6 tended to have patterns such as $[10~8]$, $[6~19]$, $[3~4]$. These patterns were partial paths from three original places to the correct meeting point ``Park". 
This explains that Examinees in these two classes had a moderate chance to answer the question correctly. 

%Students in Class 2 have can find three paths to ``Park" well.
%It shows that the students in this class explore a lot on the map with logics. This explains why they can answer this item with very high probability.
%The students in Class 3 are most efficient since their action patterns are very clean and structured. They only have patterns connecting all three original places to ``Park".  
%A key difference between Class 2 and Class 3 is that students in Class 2 find pattern $[21~14~22]$ and students in Class 3 find pattern $[20~9]$. Apparently $[20~9]$ is a shorter path connecting ``Silver" and ``Park".
%Hence Class 3 has the highest correct rate.
%Students in Class 4 have 2-grams and 3-grams less frequently. Instead, they have larger 1-gram probability. It implies that they do not have good strategy to solve the ``Traffic Item". 
%This explains why people in this class have lower probability to answer the item correctly.
%
%Lastly, the students in Class 6 can identify most top frequent patterns but with slower response speed. 

%Based on the fitted model, the students could be classified into meaningful classes in terms of their pattern frequencies and response speed.
To summarize, the proposed approach produces a useful model and classifies examinees into interpretable classes.  
The results suggest that an efficient examinee (i.e. fewer actions, higher usage of frequent patterns) was more likely to successfully complete the task.  
Since ``Traffic" item tests the ability of ``Exploring and understanding" and ``Planning and executing" \citep{pisa2014}, our results suggest that this item achieved what it was intended for.

\section{Discussion \label{sec:7}}

In this paper, we proposed a new statistical model, the latent theme dictionary model, to deal with the process data and developed the NB-LTDM algorithm.
%Unlike many other work \citep{lian2015multitask,  xu2017dirichlet, liu2018pisa}, to our knowledge, this is the first model aiming to
%It can be used to
%cluster population and extract co-occurrent patterns simultaneously.
The new approach allows us to extract co-occurrent patterns and to classify individuals automatically based on data without pre-specifying the dictionary and the number of classes.
In addition, we established the theoretical properties of the proposed method, including model identifiability and consistency of parameter estimation.
%Interpretability is another advantage of proposed model, in which all model parameters have their practical meanings.
The simulation results confirmed the theoretical findings.
We also applied the new method to the 2012 PISA ``Traffic" item and obtained meaningful results.   

%
%We apply our method to log files of the ``Traffic" Item from PISA 2012 and get interesting findings. Examinees are classified into six groups based on distinct event sequence lengths, varied response speeds and different preferences of using action patterns. Particularly, examinees who plan ahead with good strategies will have higher chance to complete the task successfully. These confirm the existence of intrinsic patterns beneath the event process in ``Traffic" item.

It is easy to incorporate domain knowledge into our approach. 
If certain patterns are selected by experts, we can simply add them to the dictionary. On the other hand, if some patterns are known to be impossible or meaningless, they can be excluded from the dictionary.
Because of its generality, the proposed model can be applied in other context such as text mining and speech pattern recognition, where different articles and speeches could be clustered based on their word patterns. It can also be applied in user behavioral studies in e-commerce, online social networking, etc., where users' frequent daily action patterns can be extracted and user preference database can thus be built.

There are limitations in the proposed method that need to be addressed in further studies. 
First, although LTDM focuses on finding the ordered event pattern structure within a sentence,  we do not have an automated general rule for splitting the original event sequence to a list of sentences. 
Rather, the current sentence splitting method is ad hoc, relying on expert knowledge.
Second, it is an exploratory method to discover the underlying dictionary of action patterns and latent classes of examinees.
However, the current algorithm does not have the theoretically guaranteed convergence, though it works well empirically.
Third, in the current setting, the response speed only depends on examinee's latent class membership which may not fully capture the heterogeneity among examinees. 
A possible approach is to introduce individualized random effects to accommodate such heterogeneity.

%One potential direction of future research is to generalize the current model to extract non-consecutive patterns, which may be useful in explaining human long term behaviors. In addition, the computational method has room to be streamlined. For instance, ``split" and ``sample" parts of Algorithm 1 can be parallelized due to the independence assumption.

%\vspace{\fill}\pagebreak

\bibliographystyle{rss}
\bibliography{ltdm}

%\bibitem{Allman} E.S. Allman, C. Matias, and J.A. Rhodes. Identifiability of parameters in latent structure
%models with many observed variables. The Annals of Statistics, 

\vspace{\fill}\pagebreak

%% ITEM 9 [See the "howto.tex" file.]
\appendix
\renewcommand{\theequation}{A\arabic{equation}}
\setcounter{equation}{0}
\renewcommand{\thesection}{\Alph{subsection}}
\setcounter{section}{0}
%\section*{Appendix}

\section*{Appendix A: Conditions C1 and C2}

We provide the exact statements of conditions C1 - C2 in this appendix.
\begin{enumerate}
	%\item[C1'] There exists a partition $\{\mathcal I_1, \mathcal I_2, \mathcal I_3 \}$ of 1-grams such that for any $e_1 \in I_1$, $e_2 \in I_2$ and $e_3 \in I_3$, sentence $E=(e_j, e_k),  j \neq k \in \{1,2,3\}$ and sentence $E = (e_1, e_2, e_3)$ admit only one separation. Cardinalities of three sets satisfy: $|\mathcal I_1|$, $|\mathcal I_2| \geq J$ and $|\mathcal I_3| \geq 2$.
	\item[C1.a] For each equivalence class $[j]$ with size larger than 1, there exists a partition $\{\mathcal I_{[j],1}, \mathcal I_{[j],2}, \mathcal I_{[j],3} \}$ of 1-grams such that for any $e_1 \in \mathcal I_{[j],1}$, $e_2 \in \mathcal I_{[j],2}$ and $e_3 \in \mathcal I_{[j],3}$, sentence $E=(e_{l}, e_k),  l \neq k \in \{1,2,3\}$ and sentence $E = (e_1, e_2, e_3)$ admit only one separation. Cardinalities of three sets satisfy $|\mathcal I_{[j],1}|$, $|\mathcal I_{[j],2}|$ and $|\mathcal I_{[j],3}| \geq |[j]|$. Here $|[j]|$ is the cardinality of equivalence class $[j]$.
	\item[C1.b] Define $T$-matrices $T_{[j],1}$, $T_{[j],2}$ and $T_{[j],3}$ such that $T_{[j],k} [l,j_1] = \frac{\theta_{j_1 l}}{1 - \theta_{j_1 l}}$ for $e_l \in \mathcal I_{[j],k}$, $j_1 \in [j]$, and $k = 1, 2 ~\textrm{or}~ 3$. Matrices $T_{[j],1}$, $T_{[j],2}$ and $T_{[j],3}$ have full column rank.
	\item[C2.a] For each equivalence class $[j]$ with size larger than 1 and for any $l$-gram $w = [e_1~e_2 \ldots ~ e_l]$ with $l \geq 2$, there exists $\mathcal D_{[j],w}$ (the subset of 1-grams) such that (1) for any $e \in \mathcal D_{[j],w}$, sentence $E = (e_1, \ldots e_l, e)$ does not admit other separations containing $(l+1)$-gram or $l$-gram other than $w$; (2) cardinality of $\mathcal D_{[j],w}$  is greater than or equal to $|[j]|$.
	\item[C2.b] Define matrix $T_{[j],w}$ such that $T_{[j],w} [e,j_1] = \frac{\theta_{j_1 e}}{1 - \theta_{j_1 e}}$ for $e \in \mathcal D_{[j],w}$ and $j_1 \in [j]$.
	Matrix $T_{[j],w}$ has full column rank.
\end{enumerate}

Conditions C1 - C2 pertain to the dictionary and parameter structures.
Specifically, 
Condition C1.a puts the restrictions on 1-grams such that not all combinations of 1-grams are considered as patterns, which ensures the pattern frequency can be identified.
It is very similar to the sufficient conditions in identifiability of diagnostic classification models \citep[DCMs,][]{xu2017identifiability, fang2019identifiability}, where they require all items can be divided into three non-overlapping item sets. Here 1-gram can be viewed as the counterpart of item in DCMs.
Condition C2.a requires that each $l$-gram is not overlapped with other patterns to some extent and thus can be identified.
Conditions C1.b and C2.b require that the examinees from different groups should have different pattern frequencies.

The $T$-matrices here share the similar ideas to those in \cite{liu2012data, liu2013theory}. We use the following example to illustrate this idea.
\begin{example}
	Consider a 2-class model with $\lambda_1 = \lambda_2$ and $\mathcal D = \{ [a], [b], [c], [d], [e], [f], [a~b], [c~d], [e~f]\}$. Pattern probability $\{\theta_{jw}\}$ is 
	\[
	\begin{blockarray}{cccccccccc}
	& [a] & [b] & [c] & [d] & [e] & [f] & [a~b] & [c~d] & [e~f] \\
	\begin{block}{c(ccccccccc)}
	Class 1 & 0.5 & 0.5 & 0.5 & 0.5 & 0.5 & 0.5 & 0.5 & 0.5 & 0.5\\
	Class 2 & 0.75 & 0.25 & 0.75 & 0.25 & 0.75 & 0.25 & 0.5 & 0.5 & 0.5 \\
	\end{block}
	\end{blockarray}
	\]
	%\[
	%\{\theta_{jw}\} = \kbordermatrix{
	%	& [a] & [b] & [c] & [d] & [e] & [f] & [a~b] & [c~d] & [e~f] \\
	%	Class 1 & 0.5 & 0.5 & 0.5 & 0.5 & 0.5 & 0.5 & 0.5 & 0.5 & 0.5\\
	%	Class 2 & 0.75 & 0.25 & 0.75 & 0.25 & 0.75 & 0.25 & 0.5 & 0.5 & 0.5 \\
	%}.
	%\]
	We claim this setting is identifiable.
	
	Notice that Classes 1 and 2 are in the same equivalence class [1].
	We can construct $\mathcal I_{[1],1} = \{[a], [b]\}$, $\mathcal I_{[1],2} = \{[c],[d]\}$, and $\mathcal I_{[1],3} = \{[e],[f]\}$. It is easy to check that their $T$-matrices satisfy
	\begin{eqnarray*}
		T_{[1],1} = T_{[2],1} = T_{[3],1} = 
		\begin{pmatrix}
			1 & 3 \\
			1 & 1/3\\
		\end{pmatrix}.
	\end{eqnarray*}
	Hence Condition C1 is satisfied, since they all have full column rank.
	For $w = [a~b]$, we can set $\mathcal D_{[1],w} = \{c,d\}$ by checking that both sentences $(a,b,c)$ and $(a,b,d)$ have only one separation. 
	Its $T$-matrix is
	\begin{eqnarray*}
		T_{[1],w}= 
		\begin{pmatrix}
			1 & 3 \\
			1 & 1/3\\
		\end{pmatrix},
	\end{eqnarray*}
	which is also full-column rank. Similarly, we can check it for $[c~d]$ and $[e~f]$. Thus Condition C2 is also satisfied.
	Furthermore, Assumption A1 holds since both classes contain all sentences in $\mathcal O$.
	Lastly, Assumption A2 obviously holds.  
\end{example}

\section*{Appendix B: Proofs}

To prove main theoretical results, we start with two lemmas which play key roles for dictionary and parameter identifiability. The proof of Lemma \ref{keylem} is presented at the end of this section.

\begin{lemma}[\cite{kruskal1977three}]\label{Lem}
	
	Suppose $A,B,C,\bar{A},\bar{B},\bar{C}$ are six matrices with $R$ columns.
	There exist integers $I_0$, $J_0$, and $K_0$ such that $I_0+J_0+K_0 \geq 2R+2$.
	In addition, every $I_0$ columns of $A$ are linearly independent, every $J_0$ columns of $B$ are linearly independent, and every $K_0$ columns of $C$ are linearly independent.
	Define a triple product to be a three-way array $[A,B,C] = (d_{ijk})$ where $d_{ijk}=\sum_{r=1}^{R} a_{ir} b_{ir} c_{kr}$.
	Suppose that the following two triple products are equal $[A,B,C]=[\bar{A},\bar{B},\bar{C}]$. Then, there exists a column permutation matrix $P$ such that $\bar{A}=AP\Lambda, \bar{B}=BPM, \bar{C}=CPN$, where $\Lambda, M, N$ are diagonal matrices and $\Lambda MN =$ identity.
	Column permutation matrix is right-multiplied to a given matrix to permute the columns of that matrix.
	
\end{lemma}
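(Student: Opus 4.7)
This is the classical three-way uniqueness theorem of \cite{kruskal1977three}, so the plan is to invoke the cited result directly rather than re-derive a multilinear-algebra theorem that sits outside the main theme of the paper. I will nonetheless sketch the strategy one would follow to prove it from scratch, since the same kind of slice argument underlies the identifiability steps in the proof of Theorem \ref{main}.

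The core tool is Kruskal's permutation lemma. Starting from the identity $[A,B,C]=[\bar A,\bar B,\bar C]$, I would slice the three-way array along the third mode to get, for every index $k$, a matrix equality of the form $A\,D_k\,B^{\top}=\bar A\,\bar D_k\,\bar B^{\top}$, where $D_k$ and $\bar D_k$ are the diagonal matrices built from the $k$-th row of $C$ and of $\bar C$ respectively. Taking carefully chosen linear combinations of these slices produces matrix equalities whose rank can be controlled from above by the column ranks of $A$ and $B$, and from below by the k-rank quantities $I_0$, $J_0$, $K_0$. The slack $I_0+J_0+K_0\geq 2R+2$ is exactly what is needed to force any column of $\bar A$ that is not a scalar multiple of a single column of $A$ to violate one of the k-rank conditions on $B$ or $C$. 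The same argument, cyclically rotated through the three modes, delivers $\bar A=AP\Lambda$, $\bar B=BPM$ and $\bar C=CPN$ with a common column-permutation matrix $P$; the identity $\Lambda MN=I$ then follows by matching the $R$ rank-one summands $a_r\otimes b_r\otimes c_r$ term by term on both sides.

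The main obstacle, which Kruskal handles with considerable care, is the combinatorial heart of the permutation lemma itself: one must rule out the possibility that several columns of $\bar A$ lie simultaneously in the span of multiple columns of $A$, and this requires a pigeonhole-style argument on the supports of nontrivial linear relations. I would not attempt to reproduce that argument in this paper. The cleanest course is to cite Kruskal's result as stated and then leverage it as a black box in the identifiability proof of the LTDM, where $A$, $B$, $C$ will be instantiated by the $T$-matrices that appear in Conditions C1 and C2.
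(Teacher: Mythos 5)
Your approach matches the paper's exactly: the paper states this lemma as a citation to \cite{kruskal1977three} and uses it as a black box in the proof of Theorem \ref{main}, providing no proof of its own (only Lemma \ref{keylem} is proved in Appendix B). Your sketch of the slice-and-permutation-lemma argument is a faithful summary of Kruskal's actual proof, but reproducing it is indeed unnecessary here, so citing the result as stated is the right call.
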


\begin{lemma}\label{keylem}
	Under Assumptions A1 and A2, it holds that $\mathcal O_{[j]} = \mathcal O_{[j]}^{'}$ if and only if $\mathcal D_{[j]} = \mathcal D_{[j]}$.  
\end{lemma}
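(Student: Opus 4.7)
The reverse direction ($\mathcal{D}_{[j]} = \mathcal{D}_{[j]}' \Rightarrow \mathcal{O}_{[j]} = \mathcal{O}_{[j]}'$) is immediate, since identical dictionaries generate identical sets of separable sentences. For the forward direction, the plan is to induct on pattern length, peeling off patterns length-by-length starting from $1$-grams, exploiting the fact that each separation uses every pattern at most once.

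For the base case, a single-event sentence $(e)$ admits only one possible separation, the single $1$-gram $[e]$, so $(e) \in \mathcal{O}_{[j]}$ if and only if $[e] \in \mathcal{D}_{[j]}$. The hypothesis $\mathcal{O}_{[j]} = \mathcal{O}_{[j]}'$ therefore forces the two dictionaries to agree on all $1$-grams.

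For the inductive step, suppose every pattern of length at most $l-1$ is common to $\mathcal{D}_{[j]}$ and $\mathcal{D}_{[j]}'$, and pick an arbitrary $l$-gram $w = [e_1\,e_2\,\ldots\,e_l] \in \mathcal{D}_{[j]}$ (its events are pairwise distinct by A2). The aim is to show $w \in \mathcal{D}_{[j]}'$; the reverse inclusion follows by symmetry. To do this one constructs a witness sentence $E^{\#} \in \mathcal{O}_{[j]}$ with the property that every valid separation of $E^{\#}$ in $\mathcal{D}_{[j]}'$ must contain an $l$-gram covering the $w$-block, which by A2 can only be $w$ itself. A natural construction concatenates $w$ with a buffer of events drawn from $\mathcal{E} \setminus \{e_i\}$ for some chosen $i$, whose length strictly exceeds $n_{[j],e_i}$; Assumption A1 guarantees that such $E^{\#}$ remains in $\mathcal{O}_{[j]}$. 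Because the buffer is too long to be covered by shorter patterns alone (each pattern may appear at most once per separation), every separation of $E^{\#}$ in $\mathcal{D}_{[j]}'$ must use at least one $l$-gram; A2 together with the inductive hypothesis then pins that $l$-gram down to $w$.

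The main obstacle is the rigorous verification that the witness $E^{\#}$ truly forces $w$ into the separation in $\mathcal{D}_{[j]}'$. This is precisely where A1 (providing the requisite buffer inside $\mathcal{O}_{[j]}$) and A2 (prohibiting any pattern with repeated events, thereby eliminating candidate $l$-grams that could masquerade for $w$) must cooperate with the TDM's ``each pattern at most once'' constraint. Once this technical core is pinned down, the induction combined with the symmetric inclusion assembles directly into the desired equality $\mathcal{D}_{[j]} = \mathcal{D}_{[j]}'$.
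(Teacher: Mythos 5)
Your trivial direction and the 1-gram base case are fine, and the general idea of appending a long $e$-free buffer to $w$ is in the spirit of the paper's argument, but the construction of the witness is incoherent as stated. A buffer of events from $\mathcal{E}\setminus\{e_i\}$ whose length strictly exceeds $n_{j,e_i}$ cannot itself be a sentence of $\mathcal{O}_j$, because $n_{j,e_i}$ is by definition the length of the longest $e_i$-free sentence of $\mathcal{O}_j$; and Assumption A1 does not certify that your concatenation $E^{\#}$ lies in $\mathcal{O}_{[j]}$ at all. A1 runs in the opposite direction: it takes a sentence already known to lie in $\mathcal{O}_{[j]}$ (hence in $\mathcal{O}_{[j]}'$, since the two are assumed equal) that contains a long enough $e$-free subsentence from $\mathcal{O}_j$, and places it in $\mathcal{O}_j$. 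Membership of the witness in $\mathcal{O}_{[j]}$ must instead come from its being generated by $\mathcal{D}_j$ itself, which is why the paper builds it as $(\grave O_j\ w)$ with $\grave O_j \in \mathcal{O}_j$ an extremal sentence of the form $(Q_1 Q_2)$: $Q_1$ is $e_1$-free, contains a longest $e_1$-free sentence $\check O_j$, and is as long as possible, while $Q_2$ begins with $e_1$ and is as short as possible.

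The more serious gap is the ``technical core'' you flag but do not close. The claim that every $\mathcal{D}_{[j]}'$-separation of the witness must contain an $l$-gram covering the $w$-block does not follow from ``the buffer is too long to be covered by shorter patterns alone'': the at-most-once constraint does not prevent a separation from covering the buffer with many \emph{distinct} shorter patterns, and even if some $l$-gram were forced somewhere, nothing locates it on the $w$-block. The dangerous separations are precisely those in which a $\mathcal{D}'$-pattern straddles the buffer/$w$ boundary --- either the final pattern is a proper sub-pattern of $w$ (the paper's Case 1) or some pattern strictly contains $w$ and reaches back into the buffer (Case 2, subdivided into 2.a--2.d). Ruling these out is where the extremal choices of $Q_1$, $Q_2$ and Assumptions A1, A2 actually do the work: from a hypothetical straddling separation one manufactures a new sentence which, lying in $\mathcal{O}_{[j]}' = \mathcal{O}_{[j]}$ and containing $\check O_j$, belongs to $\mathcal{O}_j$ by A1 and then contradicts the maximality of $Q_1$, the minimality of $Q_2$, or A2 (when a super-pattern would need two copies of $e_1$). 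Your induction on pattern length does not help with this, since a competing separation may use $\mathcal{D}'$-patterns of length $\geq l$ other than $w$, about which the inductive hypothesis is silent; the paper's proof needs no such induction. As it stands, the proposal reproduces the easy parts of the argument and leaves unproven the step on which the lemma actually rests.
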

Here we recall that $\mathcal O_{[j]}$ is the observed sentence set generated from equivalence class $[j]$ and $\mathcal D_{[j]}$ is the dictionary consisting of patterns from equivalence class $[j]$.

\begin{proof}[Proof of Theorem 1]
	For every model $\mathcal P = (\mathcal D, \{\theta_{jw}\}, \{\lambda_j\}, \pi, \kappa) \in \mathfrak P^0$, 
	we need to show that if there exists another model $\mathcal P^{'}$ such that 
	\begin{eqnarray}\label{main:obj}
	P(K|\kappa) \cdot \bigg\{ \sum_z  \pi_z \prod_{k=1}^K \bigg\{ \sum_{S_{k} \in \mathcal S_{k}} P(S_{k}, \tilde T_{k}| z) \bigg\} \bigg\} 
	=
	P(K|\kappa^{'}) \cdot \bigg\{ \sum_z  \pi_z^{'} \prod_{k=1}^K \bigg\{ \sum_{S_{k} \in \mathcal S_{k}} P(S_{k}, \tilde T_{k}| z) \bigg\} \bigg\}, \nonumber\\
	\end{eqnarray}
	it must hold $\mathcal P = \mathcal P^{'}$.
	
	We prove it through the following steps. 
	(1) $\kappa$-identifiability: we show that the parameter $\kappa$ is identifiable.
	(2) $\lambda$-identifiability: we prove that $\lambda_{[j]} = \lambda_{[j]}^{'}$ for any equivalence class $[j]$.
	(3) Dictionary identifiability: we show that $\mathcal O = \mathcal O^{'}$ implies $\mathcal D = \mathcal D^{'}$. (4) $\{\theta\}, \pi$-identifiability: we show that $\{\theta_{jw}\} \overset{p}{=} \{\theta_{jw}^{'}\}$
	and $\pi \overset{p}{=} \pi^{'} $.

	For $\kappa$-identifiability, we can see that the marginal distribution of $e_{1:N}$ and $t_{1:N}$ is
	\begin{eqnarray}
	P(e_{1:N}, t_{1:N}) = P(K|\kappa) \cdot \bigg\{ \sum_z  \pi_z \prod_{k=1}^K \bigg\{ \sum_{S_{k} \in \mathcal S_{k}} P(S_{k}, \tilde T_{k}| z) \bigg\} \bigg\}.
	\end{eqnarray}
	By taking $K = 0$, we have that $P(e_{1:N}, t_{1:N}) = P(K = 0)$.
	Then it must hold that
	\begin{eqnarray*}
		e^{- \kappa} = e^{- \kappa^{'}}.
	\end{eqnarray*}
	This implies that $\kappa = \kappa^{'}$.

	For $\lambda$-identifiability, we consider take $K = 1$ and an event sentence $E = (e)$ and $\tilde T = (t)$. Then, \eqref{main:obj} becomes
	\begin{eqnarray}
	P(K = 1|\kappa) \bigg\{ \sum_{j=1}^J  \pi_j \theta_{j e} \lambda \exp\{ - \lambda t\}  \bigg\} 
	= 
	P(K = 1|\kappa^{'}) \bigg\{ \sum_{j=1}^J  \pi_j^{'} \theta_{j e}^{'} \lambda^{'} \exp\{ - \lambda^{'} t\}  \bigg\}. \nonumber \\
	\end{eqnarray}
	By $\kappa$-identifiability, we further have 
	\begin{eqnarray}\label{lambda:obj}
    \sum_{j=1}^J  \pi_j \theta_{j e} \lambda_j \exp\{ - \lambda_j t\} 
	= 
    \sum_{j=1}^J  \pi_j^{'} \theta_{j e}^{'} \lambda_j^{'} \exp\{ - \lambda_j^{'} t\}
	\end{eqnarray}
	after simplification. Let $t \rightarrow \infty$, we must have that $\lambda_{[j_0]} = \lambda_{[j_0]}^{'}$, where $[j_0]$ is the equivalence class with minimum lambda value.
	Hence, we also have $\sum_{j \in [j_0]} \pi_j \theta_{je} = \sum_{j \in [j_0]} \pi_j^{'} \theta_{je}^{'}$. Then \eqref{lambda:obj} becomes 
	\begin{eqnarray*}
	\sum_{j \notin [j_0]}  \pi_j \theta_{j e} \lambda_j \exp\{ - \lambda_j t\} 
	= 
	\sum_{j \notin [j_0]}  \pi_j^{'} \theta_{j e}^{'} \lambda_j^{'} \exp\{ - \lambda_j^{'} t\}.
	\end{eqnarray*}
	By the similar strategy, we can show that $\lambda_{[j]} = \lambda_{[j]}^{'}$ for every equivalence class $[j]$. This gives $\lambda$-identifiability.
	
	For the dictionary identifiability, we would like to point out that its proof is not covered in \cite{deng2014association}. Therefore, we seek an alternative approach to prove it. 
	
	By taking $K = 1$, an arbitrary sentence $E \in \mathcal O$ and $\tilde T = (t, \ldots, t_{n_E})$ where $n_E$ is the sentence length. Then \eqref{main:obj} becomes
	\begin{eqnarray}\label{key:structure}
	\sum_{[j]}  [\sum_{j_1 \in [j]} \pi_{j_1} P(E|j)] (\lambda_{[j]})^{l_E} \exp\{ - \lambda_{[j]} n_E t\} 
	= 
	\sum_{[j]}  [\sum_{j_1 \in [j]} \pi_{j_1}^{'} P^{'}(E|j)] (\lambda_{[j]}^{'})^{n_E} \exp\{ - \lambda_{[j]}^{'} n_E t\} \nonumber. \\ 
	\end{eqnarray}
	Comparing the coefficients on both sides of \eqref{key:structure}, we then have
	\begin{eqnarray}\label{simple:obj}
	\sum_{j_1 \in [j]} \pi_{j_1} P(E|j) = \sum_{j_1 \in [j]} \pi_{j_1}^{'} P^{'}(E|j).
	\end{eqnarray} 
	This implies that $\mathcal O_{[j]} = \mathcal O_{[j]}^{'}$.
	By Lemma \ref{keylem}, we then have $\mathcal D_{[j]} = \mathcal D_{[j]}^{'}$. Notice that $\mathcal D = \cup_{[j]} \mathcal D_{[j]}$. It concludes the dictionary identifiability.
	
	For $\{\theta\}, \pi$-identifiability, we prove it by making use of \eqref{simple:obj}. In \eqref{simple:obj}, we take $E = (e)$ for $e \in \mathcal E$, $E = (e_1, e_2)$ with $e_1, e_2$ from different partition sets, and $E = (e_1, e_2, e_3)$ with $e_k \in \mathcal I_{[j],k}, (k = 1,2,3)$, sequentially.
	
	Without loss of generality, we suppose there is only one equivalence class. 
	According to Condition C1.a that $E$ only admits one separation, \eqref{simple:obj} can be simplified as
	\begin{eqnarray}\label{A1}
	\sum_j \eta_j \varphi_{je} = \sum_j \eta_j^{'} \varphi_{je}^{'}, & & ~\textrm{if $E = (e)$}  \\
	\sum_j \eta_j \varphi_{je_1} \varphi_{je_2} = \sum_j \eta_j^{'} \varphi_{je_1}^{'} \varphi_{je_2}^{'}, & & ~\textrm{if $E = (e_1, e_2)$} \\
	\sum_j \eta_j \varphi_{je_1} \varphi_{je_2} \varphi_{je_3} = \sum_j \eta_j^{'} \varphi_{je_1}^{'} \varphi_{je_2}^{'} \varphi_{je_3}^{'}, & & ~\textrm{if $E = (e_1, e_2, e_3)$}.
	\end{eqnarray}
	where we define $\eta_j = \pi_j \prod_e (1 - \theta_{je})$, $\varphi_{je} = \theta_{je}/(1 - \theta_{je})$.
	%\begin{eqnarray}\label{A1}
	%\sum_h \eta_h \varphi_{hv_1} \varphi_{hv_2} \varphi_{hv_3} = \sum_h \eta_h^{'} \varphi_{hv_1}^{'} \varphi_{hv_2}^{'} \varphi_{hv_3}^{'}, & & \textrm{if $O = [v_1 v_2 v_3]$}.
	%\end{eqnarray}
	In addition, if we take $E$ to be an empty sentence, then it holds
	\begin{eqnarray}\label{A2}
	\sum_j \eta_j = \sum_j \eta_j^{'}.
	\end{eqnarray}
	It is not hard to write equations \eqref{A1} - \eqref{A2} in terms of tensor products of matrices, that is,
	\begin{eqnarray*}
		[\bar T_1, \bar T_2, \bar T_3] = [\bar T_1^{'}, \bar T_2^{'}, \bar T_3^{'}],
	\end{eqnarray*}
	where
	$$\bar T_1 = \left(
	\begin{array}{ccc}
	1 & \ldots & 1 \\
	\varphi_{1 v_1} & \ldots & \varphi_{J v_1}  \\
	\vdots & \vdots & \vdots \\
	\varphi_{1 v_{I_1}} & \ldots & \varphi_{J v_{I_1}}  \\
	\end{array}
	\right),
	$$
	$$\bar T_2 = \left(
	\begin{array}{ccc}
	1 & \ldots & 1 \\
	\varphi_{1 v_1} & \ldots & \varphi_{J v_1}  \\
	\vdots & \vdots & \vdots \\
	\varphi_{1 v_{I_2}} & \ldots & \varphi_{J v_{I_2}}  \\
	\end{array}
	\right),
	$$
	and
	$$\bar T_3 = \left(
	\begin{array}{ccc}
	1 & \ldots & 1 \\
	\varphi_{1 v_1} & \ldots & \varphi_{J v_1}  \\
	\vdots & \vdots & \vdots \\
	\varphi_{1 v_{I_3}} & \ldots & \varphi_{J v_{I_3}}  \\
	\end{array}
	\right)
	\cdot \Lambda,
	$$
	Here, $\Lambda$ is a $J$ by $J$ diagonal matrix with its $j$-th element equal to $\eta_j$.
	By Condition C1.b, column ranks of matrix $\bar T_1$, $\bar T_2$ and $\bar T_3$ are full column rank. 
	Therefore, by Lemma 1, we have that
	$$\bar T_1^{'} = \bar T_1 P A, ~ \bar T_2^{'} = \bar T_2 P B ~ \textrm{ and } ~ \bar T_3^{'} = \bar T_3 P C,$$
	where matrix $P$ is a column permutation matrix, $A,B$ and $C$ are diagonal matrices satisfying $ABC = I$.
	Since elements in first rows of $\bar T_1, \bar T_2, \bar T_1^{'}, \bar T_2^{'}$  are all ones, it implies $A = B = I$. Therefore, $C = I$ as well. Thus, we have $\bar T_1^{'} = \bar T_1 P$, $\bar T_2^{'} = \bar T_2 P$ and $\bar T_3^{'} = \bar T_3 P$. By comparing element-wisely, we can see that $\eta = \eta^{'}$ and $\{\varphi_{je}\} = \{\varphi_{je}^{'}\}$ up to a label switch. Further, $\{\theta_{je}\} \overset{p}{=} \{\theta_{je}^{'}\}$ due to the monotonicity relation between $\varphi_{je}$ and $\theta_{je}$.
	
	In the following, we prove that $\theta_{jw}$ is identifiable up to the same label switch for any pattern $w \in \mathcal D$ by induction. Suppose we have that $\theta_{jw}$ is generically identifiable when $w$ belongs to \{1-grams, ..., ($k$-1)-grams\}. We need to show that $\theta_{jw}$ is identifiable if $w$ is a $k$-gram.
	
	Let $\mathcal E_k$ be the sentence set including all $k$-grams in $\mathcal D$ and all possible combinations of $k$-gram and $1$-gram that are not in $\mathcal D$. 
	It is not hard to see that for each $E \in \mathcal E_k$, its separation can only be the combinations of all $m$-grams ($m < k$) or the combinations of $k$ gram and $1$-gram.
	\begin{eqnarray*}
		\sum_j \eta_j \varphi_{jw} = \sum_j \eta_j^{'} \varphi_{jw}^{'}, & & ~ \textrm{if $E = (w)$ and $w$ is $k$-gram;}  \\
		\sum_j \eta_j \varphi_{jv_1} \varphi_{jv_2} = \sum_j \eta_j^{'} \varphi_{jv_1}^{'} \varphi_{jv_2}^{'}, 
		& & ~ \textrm{if $E = (v_1, v_2)$, $v_1$ is a $k$-gram and $ v_2 \in \mathcal D_{v_1}$; }
	\end{eqnarray*}
	By previous results that $\eta \overset{p}{=} \eta^{'}$ and $\varphi_{v} \overset{p}{=} \varphi_{v}^{'}$ for those $v$'s are 1-grams, we could write above equations in the following matrix form, that is,
	\begin{eqnarray}\label{kgram}
	\bar T_w \tilde \varphi_{w} = \mathbf 0.
	\end{eqnarray}
	where $\tilde \varphi_{w} = (\varphi_{1w} - \varphi_{1w}^{'}, \ldots, \varphi_{Jw} - \varphi_{Jw}^{'})^T$ and
	$$ \bar T_w = \left(
	\begin{array}{ccc}
	\eta_1 & \ldots & \eta_J \\
	\varphi_{1v_1}\eta_1 & \ldots & \varphi_{Jv_1}\eta_J \\
	\vdots & \vdots & \vdots \\
	\varphi_{1v_J}\eta_1 & \ldots & \varphi_{Jv_J}\eta_J   \\
	\end{array}
	\right).
	$$
	Here, $v_1, \ldots, v_J$ are $J$ distinct 1-grams in $\mathcal D_v$.
	According to Condition C2.a and C2.b,
	\eqref{kgram} admits only one solution. Therefore, $\tilde \varphi_{w} = \mathbf 0$, which implies $ \theta_{w} \overset{p}{=} \theta_{w}^{'}$. Hence, we conclude that $\theta_{jw} = \theta_{jw}^{'}$ up to a label switch for all $w \in \mathcal D$.
	This concludes the $\{\theta\}, \pi$-identifiability.	
	By completing all steps, we establish the identifiability results.
\end{proof}

\bigskip

\begin{proof}[Proof of Theorem 2]
	We prove this result by two steps. In Step 1, we prove that 
	dictionary $\mathcal D$ can be estimated consistently.
	In Step 2, we show that the estimator, $(\{\hat \theta_{jw}\} , \{\hat \lambda_j\}, \hat \pi, \hat \kappa)$, is consistent.  
	Without loss of generality, we take compact set $\boldsymbol \Theta_c$ as $\boldsymbol \Theta_c$ = \{$\theta_{jw} \in [\eta, 1 - \eta], \pi_j \in [\eta, 1 - \eta], \sum_j \pi_j = 1, \lambda_j \in [c, C]$, $\kappa \in [c, C]$\}, where $\eta$, $c$, $C$ are some positive constants such that true model parameter is in $\Theta_c$. 
	
	\textbf{Proof of Step 1} We first introduce several useful event sets. 
	Define an event set $\Omega_{D}$,
	\begin{eqnarray}
	\Omega_{D} \equiv \{\omega | \mathcal O^{\ast} \subset \{E_{ik} | i = 1, \ldots, m, k = 1, \ldots, K_i \} \}.
	\end{eqnarray}
	In other words, all possible sentences are at least observed once on $\Omega_D$. 
	Define sets $\Omega_{E} = \{\omega | |\sum_i K_i| \geq  m \kappa / 2  \}$, $\Omega_K = \{\omega | K_i \leq K_0, i = 1, \ldots, m\}$, $\Omega_T = \{\omega | \tilde t_{ik,u} \leq t_0, \textrm{for all}~ i,k,u\}$ and $\Omega_b = \Omega_{E} \cap \Omega_K \cap \Omega_T \cap \Omega_D$. Next we show that $\Omega_b$ holds with high probability.
    Specifically, we show the upper bound for $P(\Omega_b^c)$ by decomposing $\Omega_b^c$ into four parts, i.e., 
    $$\Omega_b^c \subset \Omega_{E}^c \cup \Omega_{K}^c \cup 
    (\Omega_T^c \cap \Omega_K) \cup (\Omega_D^c \cap \Omega_E \cap \Omega_K \cap \Omega_T) .$$
	\begin{enumerate}
		\item 
		By large deviation property of Poisson random variable $\sum_i K_i$, we have that $P(\Omega_{E}^c) \leq \exp\{ - c m\}$ where $c = (1 - \log 2) \kappa^{\ast} / 2$.
		\item
		It is not hard to see that $P(\Omega_K^c) \leq m P(K_i > K_0) \leq m \exp\{- c K_0\}$ for some constant $c$ by using Poisson moment generating function.
		\item
		By union bound, we can get that 
		$P(\Omega_T^c \cap \Omega_K) \leq m K_0 l_{max} P(t_{ik,u} > t_0) \leq m K_0 l_{max} \exp\{- \lambda_{min} t_0\}$.
		Here $l_{max}$ is the longest sentence length and $\lambda_{min} = \min \{\lambda_j, j = 1, \ldots, J\}$.
		\item 
		We show that $P(E)$ has a positive lower bound for every sentence $E \in \mathcal O^{\ast}$. 
		Take an arbitrary $E$, we know that $P(E) =  \sum_{z = 1}^J \sum_{S \in \mathcal F(E)} P(S|z)$.
		Hence $P(E) \geq \prod_{w} (\theta_{jw}^{\ast})^{\mathbf 1\{w \in S\}} (1 - \theta_{jw}^{\ast})^{\mathbf 1\{w \notin S\}}$ for $S \in \mathcal F(E)$ and $j$ such that $E \in \mathcal O_j$.
		Thus $P(E)$ is bounded below by $\eta^{v_D}$, i.e., $P(E) \geq \eta^{v_D}$.
		
		Therefore, we have that $P(\Omega_D^c \cap \Omega_E \cap \Omega_K \cap \Omega_T) \leq |\mathcal O^{\ast}| P(E \notin \{E_{ik} | i = 1, \ldots, m, k = 1, \ldots, K_i \}) \leq |\mathcal O^{\ast}| (1 - \eta^{v_D})^{|\{E_{ik}\}|} \leq |\mathcal O^{\ast}| (1 - \eta^{v_D})^{m \kappa / 2} $.
	\end{enumerate}
	Hence, event $\Omega_b^c$ holds with probability at most 
	$2 \exp\{ - c m\} +  m \exp\{- c K_0\} + m K_0 l_{max} \exp\{- \lambda_{min} t_0\} + |\mathcal O^{\ast}| (1 - \eta^{v_D})^{m \kappa / 2}$.
	
	On event $\Omega_b$, we have that all sentence in $\mathcal O^{\ast}$ are at least observed once. By the dictionary identifiability from Theorem 1, we know that $\hat {\mathcal D} = \mathcal D^{\ast}$. In other words, $P(\hat {\mathcal D} \neq \mathcal D^{\ast}) \leq P(\Omega_b^c)$. This completes Step 1 by choosing $K_0 = (\log m)^2$ and $t_0 = (\log m)^2$.

	%According to Condition C3 that $\min_j \max_w \theta_{jw} > 0$. 
	
	\textbf{Proof of Step 2} 
	For any fixed parameter $\Theta \equiv (\{\theta_{jw}\} , \{ \lambda_j\}, \pi, \kappa)$. Let $l(\Theta)$ denote the log-likelihood evaluated at $\Theta$.
	By identifiability we know that, $\mathbb E l(\Theta^{\ast}) > \mathbb E l(\Theta)$ for any distinct $ \Theta \in \boldsymbol \Theta_c$. By compactness of $B(\Theta^{\ast}, \delta)^c \cap \boldsymbol \Theta_c$ and continuity of $\mathbb E l(\Theta)$ (see \eqref{diff:expect}), there exists a positive number $\epsilon$ such that $\mathbb E l(\Theta) \leq \mathbb E l(\Theta^{\ast}) - 3 \epsilon$ for any $\Theta \in B(\Theta^{\ast}, \delta)^c \cap \boldsymbol \Theta_c$. In next, we prove the uniform convergence of $l(\Theta)$ to the expected value.

	By Bernstein inequality, we know that 
	\begin{eqnarray}
	P( \frac{1}{m} |\sum_i l_i(\Theta) - \mathbb E l_i(\Theta)| \geq  \sqrt{\mathrm{var}(l(\Theta))} \cdot x ) \leq 
	2 \exp\{- m x^2\}
	\end{eqnarray}
	holds point-wisely.	
	By compactness, $\textrm{var}(l(\Theta))$ is bounded by some constant $M$. Thus 
	\begin{eqnarray}\label{concentration}
	P( \frac{1}{m} |\sum_i l_i(\Theta) - \mathbb E l_i(\Theta)| \geq x ) \leq 
	2 \exp\{- m x^2 / M\}
	\end{eqnarray}
	for any fixed $\Theta$.
	
	Next, we consider bound the gap between $l_i(\Theta) - l_i(\Theta^{'})$. For notational simplicity, we omit subscript $i$ in the following displays. We know that 
	\begin{eqnarray}\label{diff:prob}
	& & l(\Theta) - l(\Theta^{'}) \nonumber \\
	& = & \log\{P(K)/P^{'}(K)\} + \log \{ \frac{\sum_j \pi_j \prod_{k = 1}^K P(E_k| j) P(T_k| j)}{\sum_j \pi_j^{'} \prod_{k = 1}^K P^{'}(E_k| j) P^{'}(T_k| j)} \} \nonumber \\
	& \leq & \log\{P(K)/P^{'}(K)\} + \log \{ \max_j \frac{\pi_j \prod_{k = 1}^K P(E_k| j) P(T_k| j)}{\pi_j^{'} \prod_{k = 1}^K P^{'}(E_k| j) P^{'}(T_k| j)}  \}   \nonumber \\
	& \leq & \log\{P(K)/P^{'}(K)\} + \max_j \log \{  \frac{\pi_j \prod_{k = 1}^K P(E_k| j) P(T_k| j)}{\pi_j^{'} \prod_{k = 1}^K P^{'}(E_k| j) P^{'}(T_k| j)}  \}  
	\end{eqnarray} 
	For $\|\Theta - \Theta^{'}\|_{\infty} \leq \delta_1$,
	we can see that $ \log\{P(K)/P^{'}(K)\} \leq C K \delta_1$ for some constant $C$.
	We can further show that $\log \{ P(E|j)/P^{'}(E|j) \} \leq C v_D \delta_1$ and $\log \{P(T|j)/P^{'}(T|j) \} \leq C l_{max} t_0 \delta_1$ on set $\Omega_b$. This is because 
	\begin{eqnarray}\label{diff:E}
	& & \log \{ P(E|j)/P^{'}(E|j) \} \nonumber \\
	& = & \log \frac{ \sum_{S \in \mathcal F(E)} P(S|j)}{\sum_{S \in \mathcal F(E)}P^{'}(S|j) } \nonumber \\
	& \leq & \max_{S \in \mathcal F(E)} \log\{ P(S|j) /  P^{'}(S|j) \} \nonumber \\
	& \leq & \max_{S \in \mathcal F(E)} \sum_{w} \max\{\log \{\theta_{jw} / \theta_{jw}^{'}\}, \sum_{w} \log \{(1 - \theta_{jw}) / (1 - \theta_{jw}^{'})\} \} \nonumber \\
	&\leq& C v_D \delta_1
	\end{eqnarray}
	and 
	\begin{eqnarray}\label{diff:T}
	& & \log \{ P(T|j)/P^{'}(T|j) \} \nonumber \\
	& = & \log \{ \prod_{u = 1}^{l_E} \lambda_j \exp\{ - \lambda_j t_u \} \} - \log \{ \prod_{u = 1}^{l_E} \lambda_j^{'} \exp\{ - \lambda_j^{'} t_u \} \} \nonumber \\
	&\leq& l_{max} (t_0 + 1) \delta_1.
	\end{eqnarray}
	With \eqref{diff:E} and \eqref{diff:T}, \eqref{diff:prob} becomes 
	\begin{eqnarray}\label{diff:together}
	l(\Theta) - l(\Theta^{'}) \leq \sum_k \{Cv_D \delta + l_{max} (t_0 + 1) \delta \} \leq C t_0 K_0 \delta_1,
	\end{eqnarray}
	by adjusting the constant.
	
	Next we prove that $\mathbb E l(\Theta)$ is a continuous function of $\Theta$. Define set $A_{k,t} = \{\omega | t - 1 \leq \max\{\tilde t_{k_1,u}; k_1 = 1, \ldots, k, u = 1, \ldots, n_{k_1}\} \leq t\}$ for $k,t = 1, 2, \ldots$. By algebraic calculation, for any $\Theta, \Theta^{'}$ with $\|\Theta - \Theta^{'}\|_{\infty} \leq \delta_1$, we have
	\begin{eqnarray}\label{diff:expect}
	& & \mathbb E l(\Theta) - \mathbb E l(\Theta^{'}) \nonumber \\
	& = & \sum_{k=0}^{\infty} P^{\ast}(K = k) \{ \int \sum_{E \in \mathcal O^{\ast}} P^{\ast}(E, T | k) \log\{\frac{P(E,T,k)}{P^{'}(E,T,k)} \} dT \} \nonumber \\
	& \leq & \sum_{k=0}^{\infty} P^{\ast}(K = k) \sum_{t = 1}^{\infty} \{ \int_{A_{k,t}} \sum_{E \in \mathcal O^{\ast}} P^{\ast}(E, T | k) \log\{\frac{P(E,T,k)}{P^{'}(E,T,k)} \} dT \} \nonumber \\
	& \overset{\eqref{diff:together}}{\leq} &
	\sum_{k=0}^{\infty} P^{\ast}(K = k) \sum_{t = 1}^{\infty} \{ \int_{A_{k,t}} \sum_{E \in \mathcal O^{\ast}} P^{\ast}(E, T | k) (C t k \delta_1) \} dT \} \nonumber \\
	& \leq & \sum_{k=0}^{\infty} P^{\ast}(K = k) \sum_{t = 1}^{\infty} (C t k \delta_1)  P^{\ast}(A_{k,t}) \nonumber \\
	&\leq& \sum_{k = 0}^{\infty} C k \delta_1 P^{\ast}(K = k) \sum_{t=1}^{\infty} k l_{max} \exp\{- \lambda_{min} t\} \nonumber \\
	&\leq& \sum_{k = 0}^{\infty} C \delta_1 l_{max} 1 / (1 - \exp\{-\lambda_{min}\}) k^2 P^{\ast}(K = k) \nonumber \\
	&\leq& C \delta_1 
	\end{eqnarray}
	by adjusting the constant.
	
	Thus we have $|\mathbb E l(\Theta) - \mathbb E l(\Theta^{'})| \leq \frac{\epsilon}{4}$ for any $\Theta, \Theta^{'}$ such that $\|\Theta - \Theta^{'}\| \leq \delta_2 \equiv\epsilon / (4C)$. Together with \eqref{diff:together}, we then have 
	\begin{eqnarray}\label{diff:all}
	& & \frac{1}{m} |\sum_i l_i(\Theta) - \mathbb E l_i(\Theta)| 
	- \frac{1}{m} |\sum_i l_i(\Theta^{'}) - \mathbb E l_i(\Theta^{'})| \nonumber \\
	&\leq& \frac{1}{m} |\sum_i l_i(\Theta) - \mathbb E l_i(\Theta) - (\sum_i l_i(\Theta^{'}) - \mathbb E l_i(\Theta^{'}))| \nonumber \\
	&\leq& \frac{1}{m} \sum_i \{|l_i(\Theta) - l_i(\Theta^{'})| + |\mathbb E l_i(\Theta) - \mathbb E l_i(\Theta^{'})| \} \nonumber \\
	&\leq& \epsilon/4 + \epsilon/4 \leq \epsilon/2,
	\end{eqnarray}
	when $\| \Theta - \Theta^{'}\|_{\infty} \leq \delta_3$.
	Here we take $\delta_3$ be $\min\{\epsilon/(4C t_0 K_0), \delta_2\}$.
	
	By the covering number technique, there exists a finite set $\mathcal N$ such that the distance of any two points from $\mathcal N$ is at least $\delta_3$. 
	Thus by \eqref{concentration}, we have 
	\begin{eqnarray*}
	P( \sup_{\Theta \in \mathcal N} \frac{1}{m} |\sum_i l_i(\Theta) - \mathbb E l_i(\Theta)| \geq \epsilon/2 ) \leq 
	2 |\mathcal N| \exp\{- m \epsilon^2 / (4 M)\}.
	\end{eqnarray*}
	Define the set $\Omega_g = \{\omega | \sup_{\Theta \in \boldsymbol \Theta_c } \frac{1}{m} |\sum_i l_i(\Theta) - \mathbb E l_i(\Theta)| \leq \epsilon \}$.
    Combined with \eqref{diff:all}, it further gives us that 
    \begin{eqnarray}\label{sup:concentration}
    P( \Omega_g^c ~\textrm{and}~
    \Omega_b) &\leq& 
    2 |\mathcal N|\exp\{- m \epsilon^2 / (4 M)\} \nonumber \\
    &\leq& 2 (\frac{D}{\delta_3})^{n_p} \exp\{- m \epsilon^2 / (4 M)\}
    \end{eqnarray}
	where $D$ is the diameter of $\boldsymbol{\Theta}_c$ and $n_p$ is the number of total model parameters.
	
	Lastly, by the definition of $\hat \Theta$ and \eqref{sup:concentration}, we have that 
	\begin{eqnarray}\label{obj:ineq}
	\frac{1}{m} \sum_i l_i(\hat \Theta) &\geq& \frac{1}{m} \sum_i l_i(\Theta^{\ast}) \nonumber \\
	&\geq& \frac{1}{m} \sum_i \mathbb E l_i(\Theta^{\ast}) - \epsilon \nonumber \\ &\geq& \sup_{\Theta \in \boldsymbol{\Theta}_c \cap B(\Theta^{\ast}, \delta)^c} \frac{1}{m} \sum_i \mathbb E l_i(\Theta) + 2 \epsilon \nonumber \\
	&\geq& \sup_{\Theta \in \boldsymbol{\Theta}_c \cap B(\Theta^{\ast}, \delta)^c} \frac{1}{m} \sum_i l_i(\Theta) + \epsilon
	\end{eqnarray}
	on $\Omega_b \cap \Omega_g$. In other words, \eqref{obj:ineq} implies that 
	\begin{eqnarray}\label{final:prob}
	P(\hat \Theta \in \boldsymbol{\Theta}_c \cap B(\Theta^{\ast}, \delta)^c) &\leq& P((\Omega_b \cap \Omega_g)^c) \nonumber \\
	&\leq& 2 (\frac{D}{\delta_3})^{n_p} \exp\{- m \epsilon^2 / (4 M)\} \nonumber \\
	& & + 2 \exp\{ - c m\} +  m \exp\{- c K_0\} + m K_0 l_{max} \exp\{ - \lambda_{min} t_0\} \nonumber \\
	& & + |\mathcal O^{\ast}| (1 - \eta^{v_D})^{m \kappa/2}.
	\end{eqnarray}
	By choosing $K_0 = (\log m)^2$ and $t_0 = (\log m)^2 $, the left hand side of \eqref{final:prob} goes to zero as $m \rightarrow \infty$.
	This concludes the proof.
\end{proof}

\bigskip

\begin{proposition}\label{prop:decomp}
	Under LTDM setting, the probability mass function $P(e_{1:N}, t_{1:N}; \Theta)$ can be written in the multiplicative form of $G(e_{1:N}; \Theta) F(e_{1:N}, t_{1:N}; \Theta_1)$ if and only if $\lambda_j = \lambda, j = 1, \ldots, J$.
	Here $\Theta_1$ is the model parameter excluding $\{\theta_{jw}\}$, $G$ and $F$ are some functions.
\end{proposition}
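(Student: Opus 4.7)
I will handle the two directions of the equivalence separately: the sufficient direction follows by direct algebra, while the necessary direction rests on a Fisher–Neyman-type reduction to the conditional density of times given events, combined with the linear independence of distinct exponentials.

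\emph{Sufficiency.} If $\lambda_j \equiv \lambda$ for every $j$, the time factor $\prod_{k,u}\lambda_z e^{-\lambda_z \tilde t_{k,u}}$ appearing in \eqref{likli} collapses to $\lambda^N e^{-\lambda t_N}$, where $N = \sum_k n_k$ and $t_N$ is the terminal event time; in particular it is free of $z$. Pulling it outside the latent-class sum gives
\[
P(e_{1:N}, t_{1:N}; \Theta) = \Bigl[\sum_{z=1}^J \pi_z \prod_{k=1}^{K} P(E_k \mid z)\Bigr] \cdot \Bigl[P(K\mid\kappa)\, \lambda^N e^{-\lambda t_N}\Bigr],
\]
and identifying the first bracket as $G(e_{1:N};\Theta)$ and the second as $F(e_{1:N}, t_{1:N};\Theta_1)$ completes this direction, since the second bracket depends on $\Theta$ only through $\Theta_1 = (\pi,\{\lambda_j\},\kappa)$ (the $\{\theta_{jw}\}$'s having been absorbed into $G$).

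\emph{Necessity.} Conversely, assume the factorization. Dividing it by its marginal over $t_{1:N}$,
\[
P(t_{1:N}\mid e_{1:N};\Theta) = \frac{F(e_{1:N}, t_{1:N};\Theta_1)}{\int F(e_{1:N}, t';\Theta_1)\,dt'}
\]
cannot depend on the event-pattern parameters $\{\theta_{jw}\}$. I plug in the simplest nontrivial test configuration, namely $K=1$ together with a length-one event sentence $E=(e)$ for some 1-gram $[e]\in\mathcal D$, which yields
\[
P(t\mid E=(e);\Theta) = \sum_{r=1}^R \bar\lambda_r\, Q_r\, e^{-\bar\lambda_r t},
\]
where $\bar\lambda_1,\ldots,\bar\lambda_R$ are the distinct values of $\{\lambda_j\}$, $Z_r=\{z:\lambda_z=\bar\lambda_r\}$, and $Q_r=\sum_{z\in Z_r}q_z$ with posterior weights $q_z \propto \pi_z\theta_{ze}\prod_{w\neq [e]}(1-\theta_{zw})$. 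Since $\{e^{-\bar\lambda_r t}\}_{r=1}^R$ are linearly independent on $(0,\infty)$, each coefficient $\bar\lambda_r Q_r$, and hence each $Q_r$, must itself be independent of $\{\theta_{jw}\}$.

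\emph{Closing the argument and the main obstacle.} At an interior parameter point (all $\theta_{jw}\in(0,1)$), a direct quotient-rule computation yields $\partial_{\theta_{z_1 e}} Q_r = c\bigl(\mathbf 1\{z_1\in Z_r\}-Q_r\bigr)$ with $c>0$; setting this to zero forces $Q_r = \mathbf 1\{z_1\in Z_r\}$. But at interior parameters every $q_z>0$, so $Q_r\in(0,1)$ for each $r$, which together with $\sum_r Q_r=1$ forces $R=1$, i.e.\ $\lambda_j\equiv\lambda$. The delicate step---and the main obstacle---is this necessity direction: one must reduce from the abstract factorization to a single-event test case, invoke linear independence of exponentials to isolate the posterior weights $Q_r$, and then push through a derivative calculation at a generic parameter point to conclude that a single equivalence class is the only possibility.
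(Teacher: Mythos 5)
Your proof is correct, and while the sufficiency step coincides with the paper's (factor the common exponential time density out of the latent-class sum; your telescoping to $\lambda^N e^{-\lambda t_N}$ is fine), your necessity argument takes a genuinely different route. The paper divides the assumed factorization by the slowest exponential $\prod_k P(\tilde T_k;\lambda_1)$, sends the gap times to infinity to peel off the smallest-rate class and pin down $G$ as $C_1(e_{1:N};\Theta_1)\prod_k P(E_k;\{\theta_{1w}\})$, and then compares both sides as polynomials in the $\theta$'s to force $\pi_j\prod_k P(\tilde T_k;\lambda_j)\equiv 0$ for $j\ge 2$, a contradiction. You instead exploit the Fisher--Neyman consequence that the conditional law of $t_{1:N}$ given $e_{1:N}$ must be free of $\{\theta_{jw}\}$, reduce to a single one-event sentence, use linear independence of $\{e^{-\bar\lambda_r t}\}$ to isolate the posterior weights $Q_r$, and rule out $R\ge 2$ by the explicit derivative identity $\partial_{\theta_{z_1e}}Q_r = c(\mathbf 1\{z_1\in Z_r\}-Q_r)$ at an interior point. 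Both arguments implicitly read the factorization as an identity over the parameter space (the paper when matching polynomial coefficients, you when differentiating in $\theta$), so neither is more demanding on that score; your version is more self-contained and makes the mechanism explicit without the $t\to\infty$ limit, at the mild extra cost of needing a $1$-gram $[e]\in\mathcal D$ so that the test sentence $E=(e)$ has positive probability and a unique separation --- if $\mathcal D$ had no $1$-grams you would take instead a minimal-length pattern $w\in\mathcal D$, whose event sequence likewise admits only the separation $(w)$, and run the identical computation with $\theta_{zw}$ and coefficients $\bar\lambda_r^{l_w}Q_r$.
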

\begin{proof}[Proof of Proposition \ref{prop:decomp}]
	First, we write out the likelihood function 
	\begin{eqnarray}\label{eqn:prop1}
	& &P(e_{1:N}, t_{1:N}; \Theta) \nonumber\\
	&=& \frac{\kappa^{K}\exp\{-\kappa\}}{K!} \sum_{j=1}^J \pi_j \prod_{k=1}^K \{P(E_k; \{\theta_{jw}\}) P(\tilde T_k; \lambda_j) \} \nonumber \\
	&=& C(K, \kappa) \sum_{j=1}^J \pi_j \prod_{k=1}^K \{P(E_k; \{\theta_{jw}\}) P(\tilde T_k; \lambda_j) \},
	\end{eqnarray}
	where $C(K, \kappa)$ is some quantity depending on $K$ and $\kappa$.
	
	We first prove the sufficient part. Suppose $\lambda_j = \lambda, j = 1, \ldots, J$. Then \eqref{eqn:prop1} can be written as 
	\begin{eqnarray}
	& &P(e_{1:N}, t_{1:N}; \Theta) \nonumber\\
	&=& C(K, \kappa) \sum_{j=1}^J \pi_j \prod_{k=1}^K \{P(E_k; \{\theta_{jw}\}) P(\tilde T_k; \lambda_j) \} \\
	&=& C(K, \kappa) \{\sum_{j=1}^J \pi_j \prod_{k=1}^K \{P(E_k; \{\theta_{jw}\})\} \prod_{k=1}^K P(\tilde T_k; \lambda),
	\end{eqnarray}
	Hence we can take $G(e_{1:N; \Theta}) = C(K, \kappa) \{\sum_{j=1}^J \pi_j \prod_{k=1}^K \{P(E_k; \{\theta_{jw}\})\}$ and $F(e_{1:N}, t_{1:N}; \Theta_1) = \prod_{k=1}^K P(\tilde T_k; \lambda)$. This concludes the sufficient part.
	
	We next prove the necessary part. Suppose it is not true. In other words, 
	we can write $P(e_{1:N},t_{1:N}; \Theta) = G(e_{1:N}; \Theta) F(e_{1:N},t_{1:N}; \Theta_1)$ when $\lambda_j$'s are not all the same.  
	Without loss of generality, we assume $\lambda_1 < \lambda_2 \leq ... \leq \lambda_J$.
	By assumption, 
	\begin{eqnarray}\label{eqn:compare}
	C(K, \kappa) \sum_{j=1}^J \pi_j \prod_{k=1}^K \{P(E_k; \{\theta_{jw}\}) P(\tilde T_k; \lambda_j) \}
	= G(e_{1:N}; \Theta) F(e_{1:N},t_{1:N}; \Theta_1).
	\end{eqnarray}
	Divided by $\prod_{k=1}^K P(\tilde T_k; \lambda_1)$ on both sides of \eqref{eqn:compare}, we get
	\begin{eqnarray}\label{eqn:compare2}
	C(K, \kappa) \sum_{j=1}^J \pi_j \prod_{k=1}^K \{P(E_k; \{\theta_{jw}\}) \frac{P(\tilde T_k; \lambda_j)}{P(\tilde T_k; \lambda_1)} \}
	= G(e_{1:N}; \Theta) \frac{F(e_{1:N},t_{1:N}; \Theta_1)}{\prod_{k=1}^K P(\tilde T_k; \lambda_1)}
	\end{eqnarray}
	By letting $\tilde t_n \rightarrow \infty; n = 1, \ldots N$, the left hand side of \eqref{eqn:compare2} becomes
	$C(K, \kappa) \pi_1 \prod_{k=1}^K \{P(E_k; \{\theta_{1w}\})$.  Then we know that $G(e_{1:N}; \Theta)$ has the form of $C_1(e_{1:N};\Theta_1) \prod_{k=1}^K \{P(E_k; \{\theta_{1w}\})$. Plug this back to \eqref{eqn:compare}, we then get
	\begin{eqnarray*}
		C(K, \kappa) \sum_{j=1}^J \pi_j \prod_{k=1}^K \{P(E_k; \{\theta_{jw}\}) P(\tilde T_k; \lambda_j) \}
		= C_1(e_{1:N};\Theta_1) \prod_{k=1}^K \{P(E_k; \{\theta_{1w}\}) F(e_{1:N},t_{1:N}; \Theta_1).
	\end{eqnarray*}
	Notice that the left hand side of above equation is a polynomial of $\{\theta_{jw}\}$'s and right hand side is a polynomial of $\{\theta_{1w}\}$'s.
	Hence it must hold that  
	$\pi_j \prod_{k=1}^K P(\tilde T_k, \lambda_j) \equiv 0$ for $j = 2, \ldots, J$, which is impossible when $\pi_j > 0$. Thus it contradicts with the assumption.
	This concludes the proof of necessary part.
\end{proof}

\bigskip

\begin{proof}[Proof of Lemma 2]
	For any pattern $w$ in $\mathcal D_{[j]}$, we need to show it also belongs to $\mathcal D_{[j]}^{'}$. 
	It is easy to see that if $w$ has the form of $[A]$, then it must belong to $\tilde D$ since $[A]$ only admits one separation.
	In the following, we only need to consider $w = [e_1 e_2 \ldots e_{l_w}]$ such that $e_1, \ldots ,e_{l_w}$ ($l_w \geq 2$) are different according to Assumption A2.
	With out loss of generality, we assume $w$ belongs to Class $j$.
	
	%Let $j$ be $j_0$ which is defined in Assumption A1.
	Let $\check O_j$ denote the longest sentence generated by $\mathcal D_j$ satisfying that (1) each event belongs to $\mathcal E - \{e_1\}$; (2) the length of $\check O_j$ is at least $n_{j,e_1}$ (See $n_{j,e_1}$'s definition in Assumption A1).
	Notice that $\check O_j$ may not be unique, we only need to consider one of them.
	Let $\grave{O}_j$ be the sentence such that it has form $(Q_1 Q_2)$ such that (1) $Q_1$ contains $\check O_j$ as its subsentence; (2) Each event in $Q_1$ belongs to $\mathcal E - \{e_1\}$; (3) $Q_1$ is longest possible; (4) The first event of $Q_2$ is $e_1$ (Be empty if it does not exist.); (5) $Q_2$ is shortest possible.
	Notice $\grave O_j$ may not be unique, we only need to consider one of them.
	Next we consider the decomposition of sentence $O_j = (\grave O_j w)$. By aid of $O_j$, we can show that $w$ must belong to $\mathcal D_{[j]}^{'}$.
	
	Since $\mathcal O_{[j]} = \mathcal O_{[j]}^{'}$, we know that $O_j \in \mathcal O_{[j]}^{'}$. Without loss of generality, we also assume that $O_j$ appears in $j$-th class of model $\mathcal P^{'}$. We claim that
	$O_j$ only has separations in form $\{\mathcal S(\grave O_j), w\}$ in $\mathcal D^{'}$. ($\mathcal S(O)$ is one realization of separation for $O$.) If not, then we must have the following cases.
	
	Case 1: There is a separation $S \in \mathcal F(O_j)$ such that $S = \{\mathcal S(R_1), w_1\}$ where $w_1$ is contained in $w$. By Assumption A2, we know that $w_1$ does not consist of $e_1$. Then, we consider sentence $(w_1 R_1)$. It is in $\mathcal O_{[j]}^{'}$, then it must in $\mathcal O_{[j]}$. By Assumption A1, we know that $(w_1 R_1)$ must belong to $\mathcal O_j$, since it contains $\check O_j$. Then, $(w_1 R_1)$ can be written in form of $(Q_1 Q_2)$ with longer $Q_1$. This contradicts with the definition of $\grave O_j$. Therefore, Case 1 cannot happen.
	
	Case 2: There is a separation $S \in \mathcal F(O_j)$ such that $S = \{\mathcal S(R_2), w_2\}$ where $w_2$ contains $w$. We further consider the following four situations.
	\begin{itemize}
		\item[2.a]
		 Suppose $R_2$ contains $\check O_j$ as its sub-sentence and does not contain events $u_1$. Since $\mathcal O_{[j]} = \tilde{\mathcal O}_{[j]}$, we know that $R_2$ must belong to $\mathcal O_{[j]}$. By Assumption A1, $R_2$ is also in $\mathcal O_j$. Then, it leads to contradiction since $R_2$ is longer than $\check O_j$. 
		\item[2.b]
		Suppose $R_2$ contains $\check O_j$ as its sub-sentence and contains events $u_1$. $R_2$ is also in $\mathcal O_j$ for the same reason as before. This time, $R_2$ can be written in the form of $(Q_1 Q_2)$ with shorter $Q_2$, which contradicts with the definition of $\grave O_j$.
		\item[2.c] 
		Suppose $R_2$ is contained in $\check O_j$. If $R_2$ is the longest sentence generated by $\mathcal D_{j}^{'}$ without $e_1$, then by Assumption A1 we know $\check O_j$ must also belong to this class. Therefore, $R_2$ is not the longest sentence. It implies that there exists a pattern $\tilde w$ with events in $\mathcal E_w$ in $\mathcal D^{'}_j$ does not contribute to $R_2$. Therefore, sentence $(\tilde w R_2 w_2)$ containing $\check O_j$ must belong to $\mathcal O_j$. By using the same argument, we know that it can also be written in the form $(Q_1 Q_2)$ with longer $Q_1$. This contradicts with the definition of $\grave O_j$.
		\item[2.d] Suppose $R_2 = \check O_j$. If $Q_2$ is not empty, then $w_2$ contains two $e_1$'s. It contradicts with Assumption A2. If $Q_2$ is empty, then $w_2 = w$ exactly.
	\end{itemize}
	Hence, we conclude the proof of this lemma.
\end{proof}

\section*{Appendix C: Parameter Estimation in NB-LTDM}

In the inner part of the NB-LTDM Algorithm, we adopt a nonparametric Bayes method
which is used to avoid selection of a single finite number of mixtures $J$. Therefore, we replace finite mixture components by an infinite mixture, that is,
\begin{eqnarray*}
	P(e_{1:N}, t_{1:N}) &=& P(K | \kappa) \sum^\infty_{j=1} v_j \prod_{k=1}^K P(E_k, T_k| j), \quad i=1,\ldots,m, \\
	\sum_{j=1}^{\infty} v_j & = & 1.
\end{eqnarray*}

For the choice of prior, we specify $\theta_{jw} \sim \text{Unif}(0,1)$, $\kappa \sim \mathrm{Ga}(1,1)$, $\lambda_j \sim \mathrm{Ga}(1,1)$ and $v = (v_1, \ldots) \sim Q$ where $Q$ corresponds to a Dirichlet process. The stick-breaking representation, introduced by \cite{sethuraman1994constructive}, implies that $v_j = V_j \prod_{l < j} (1- V_l)$ with $V_j \sim \text{Beta}(1,\alpha)$ independently for $j=1,\ldots,\infty$, where $\alpha >0$ is a precision parameter characterizing $Q$.

Hence, our nonparametric Bayesian latent theme dictionary model can be written in the following hierarchical form:
\begin{eqnarray*}
	{ S_{ik},T_{ik}| z_i = j, \{\theta_{jw}\}, \{\lambda_j\} }& \sim &\frac{1}{n_{S_{ik}}!} \prod_{w=1}^{v_D} [\theta_{jw}^{\mathbf 1_{\{w \in S_{ik}\}}} (1 - \theta_{jw})^{\mathbf 1_{\{w \not\in S_{ik}\}}}] \cdot \prod_{u = 1}^{n_{ik}} [\lambda_{j}e^{-\lambda_{j} \tilde t_{ik,u}}] \\
	E_{ik},T_{ik}| z_i, \{\theta_{jw}\}, \{\lambda_j\} & \sim & P(E_{ik}, T_{ik}| z_i) = \sum_{S \in \mathcal{F}(E_{ik})} P(S_{ik}, T_{ik}|z_i, \{\theta\}, \{\lambda\}), \quad i \in [m]; k \in [K_i]\\
	z_i &\stackrel{iid}{\sim} & \sum^\infty_{j=1} V_j \prod_{l < j} (1 - V_l) \delta_j, \quad i=1,\ldots,m \\
	V_j & \stackrel{iid}{\sim} & \text{Beta}(1,\alpha), \quad j=1,\ldots,\infty\\
	\theta_{hw} & \stackrel{iid}{\sim} & \text{Unif}(0,1), \quad j=1,\ldots,\infty, w=1,\ldots,v_D \\
	{ \lambda_{j} } &\stackrel{iid}{\sim}& \textrm{Ga}(1,1), \quad j = 1, \ldots, \infty \\
	\alpha & \sim & \text{Ga}(1,1) \\
	\kappa & \sim & \text{Ga}(1,1).
\end{eqnarray*}
where $\delta_j(\cdot)$ is the Dirac measure at $j$.

We use a data augmentation Gibbs sampler \citep{walker2007sampling} to update all parameters and latent variables. Specifically, we introduce a vector of latent variables $u=(u_1, \ldots, u_N)$, where $u_i \stackrel{iid}{\sim} U[0,1]$.  The full likelihood becomes
\begin{eqnarray*}
	& &\prod_{i=1}^m \bigg \{ \big\{ \prod_{k = 1}^{K_i} \mathbf 1_{\{ u_i < v_{z_i}\}} P(S_{ik},T_{ik}|z_i, \{\theta_{jw}\}, \{\lambda_j\}) \mathbf 1_{\{\mathcal F(E_{ik}) \in \mathcal S_{ik}\}} \big\}
	\frac{\kappa^{K_i} \exp\{-\kappa\}}{K_i !} \bigg \} \\
	& & \cdot  
	\prod_{j} \big\{ f_{Be}(V_j; \alpha) f_{Ga}(\lambda_j) \prod_{w = 1}^{v_D} f_u(\theta_{jw}) \big\} f_{Ga}(\kappa) f_{Ga}(\alpha),
\end{eqnarray*}
where $f_{Ga}$ is the density of $\mathrm{Ga}(1,1)$, $f_u$ is the density of $\mathrm{Unif}(0,1)$ and $f_{Be}(\cdot;\alpha)$ is the density of $\mathrm{Beta}(1, \alpha)$.

Then Gibbs sampler iterates through the following steps:
\begin{enumerate}
	\item Update $u_i$, for $i=1,\ldots,m$, by sampling from $U(0,v_{z_i})$.
	\item Update $\theta_{hw}$, for $h=1,\ldots,j^*, w=1\ldots, v_D$, by sampling from
	\begin{equation*}
	\text{Beta} \bigg( \sum^n_{i:z_i = h} \sum^{K_i}_{k=1} 1(\theta_{hw} \in S_{ik}) + 1, \sum^n_{i:z_i = h} \sum^{K_i}_{k=1} 1(\theta_{hw} \notin S_{ik} ) + 1 \bigg).
	\end{equation*}
	\item { Update $\lambda_j$, for $j = 1, \ldots, j^{\ast}$ ($j^{\ast} = \max\{z_i\}$), by sampling from 
		\begin{eqnarray*}
			\textrm{Ga}(1 + \sum_{i:z_i = j} \sum_{k=1}^{K_i} l_{T_{ik}}, 1 + \sum_{i:z_i = j} \sum_{k=1}^{K_i} \sum_{u = 1}^{l_{T_{ik}}} T_{ik,u}).	
		\end{eqnarray*}
	}
	\item Update $V_j$, for $j=1,\ldots,j^*$, by sampling from $\text{Beta}(1,\alpha)$ truncated to fall into the interval
	\begin{equation*}
	\bigg[\max_{i:z_i = j} \frac{u_i}{\prod_{l < j} (1-V_l)},
	1- \max_{i: z_i > j} \frac{u_i}{V_{z_i} \prod_{l < z_i, l \neq j}(1-V_l)} \bigg].
	\end{equation*}
	\item Update $z_i$, for $i=1,\ldots,m$, by sampling from
	\begin{equation*}
	P(z_i = j|e_{1:N_i},t_{1:N_i},\mathbf S_i, \{\theta_{jw}\},V,u,z_{-i}) = \frac{1(j \in A_i) \prod^{K_i}_{k=1} P(S_{ik},T_{ik}|\{\theta_{lw}\}, \lambda_j) }{\sum_{l \in A_i} \prod^{K_i}_{k=1} P(S_{ik}, T_{ik}|\{\theta_{lw}\}, \lambda_j) }1(S_{ik} \in \mathcal{F}(E_{ik})),
	\end{equation*}
	where $A_i := \{ j: v_j > u_i\}$.
	To identify the elements in $A_1,\ldots,A_m$, first update $V_j$ for $j=1,\ldots,\tilde{k}$, where $\tilde{j}$ is the smallest value satisfying
	\begin{equation}\label{find_A}
	\sum^{\tilde{j}}_{j=1} v_j > 1 - \min \{u_1,\ldots, u_m \}.
	\end{equation}
	
	Therefore, $1,\ldots,\tilde{j}$ are the possible values for $z_i$. Note that we have already updated $V_j$ for $j=1,\ldots,j^*$. Therefore, we first check if $j^*$ satisfies (\ref{find_A}). If yes, then we do not have to sample more; otherwise sample $V_j \sim \text{Beta}(1,\alpha)$ for $j=j^*+1,\ldots$ until (\ref{find_A}) is satisfied. %(See Walker 2007)
	In this case, we also have to sample $\theta_{j w}$ from $U(0,1)$ and {$\lambda_j$ from $\textrm{Ga}(1,1)$} for $j = j^*+1,\ldots, \tilde{j}$ and $w =1,\ldots,v_D$ in order to compute $P(S_{ik}, T_{ik} | \theta_j, \lambda_j)$ for $j=j^* +1,\ldots,\tilde{j}$.

	%	\item Update $\alpha$ from $\Gamma(a + k^*, b - \sum^{k^*}_{h=1} \log(1-V_h))$.
	\item Update $S_{ik}$, for $i=1,\ldots,m$, $k=1,\ldots,K_i$, by sampling from
	\begin{equation*}
	{ P(S_{ik} = S|E_{ik}, \theta_{z_i} ) } =  \frac{P(S_{ik} = S |\theta_{z_i})}{ \sum_{S' \in \mathcal{F}(E_{ik})} P(S_{ik} = S' | \theta_{z_i}) }  1(S \in \mathcal{F}(E_{ik})).
	\end{equation*}
	%(Alternatively, use Metropolis-Hastings algorithm, where the proposal distribution is formed by considering splitting and merging of $S_{ik}$.)
	\item Update $\kappa$, which follows gamma distribution $\textrm{Ga}(1 + \sum_i K_i, 1 + m)$.
	\item Sample $\alpha$ from posterior $\textrm{Ga}(1 + j^{\ast}, 1  - \sum_{j=1}^{j^{\ast}} \log(1 - V_{j}))$.
\end{enumerate}

\section*{Appendix D: Estimated Dictionary in Traffic Item}

In the ``Traffic" item, the NB-LTDM algorithm found a dictionary $\hat {\mathcal D}$ with $\hat v_D = 82$.
For each pattern $w$ in $\hat {\mathcal D}$, we classified the six classes into two clusters based on their pattern probabilities $\theta_{jw}(j = 1, \ldots, 6)$.
Those classes with high pattern probabilities are clustered together and shown in Table \ref{realdata:dictionary}.

\tablehere{11}

\vspace{\fill}\pagebreak

%% ITEM 10 [See the "howto.tex" file.]

\section*{\label{pre:model} Appendix E: Latent Class Model and Theme Dictionary Model}

In this section,
we briefly recall two popular models, latent class model \citep[LCM;][]{gibson1959three} and theme dictionary model \citep[TDM;][]{deng2014association}, which are related with the proposed LTDM.

Widely adopted in biostatistics, psychometrics and machine learning literature \citep[e.g.,][]{goodman1974exploratory, vermunt2002latent, templin2010diagnostic}, 
LCM relates a set of observed variables to a discrete latent variable, which is often used for indicating the class label. LCM assumes a local independence structure, i.e.,
\begin{eqnarray*}
	P(X_1, \ldots, X_K | Z) = \prod_{k = 1}^K P(X_k | Z),
\end{eqnarray*}
where $X_1, \ldots, X_K$ are $K$ observed variables and $Z$ is a discrete latent variable with density $P(Z = j) = \pi_j, ~ j = 1, \ldots J$. Thus, the joint (marginal) distribution of $X_1, \ldots, X_K$ takes form
\begin{eqnarray*}
	P(X_1, \ldots, X_K) =  \sum_{j = 1}^J \bigg\{ \pi_j \prod_{k = 1}^K P(X_k | Z = j)\bigg \}.
\end{eqnarray*}

For TDM \citep{deng2014association}, it typically handles observations known as words/events. It can be used for identifying associated event patterns. 
The problem of finding event associations is also known as market basket analysis \citep{piatetsky1991discovery, hastie2005elements, chen2005market}. 
Under TDM, a \textit{pattern} is a combination of several events. A collection of distinct patterns forms a \textit{dictionary}, $\mathcal D$. An \textit{observation}, $E$, is a set of events.
%scrambled version of combination of patterns. 
%That is to say, i
In TDM, we observe $E$ but do not know which patterns it consists of. In other words, $E$ could be split into different possible partitions of patterns. For each possible partition, we call it a separation of $E$.
The collection of multiple observations $\mathbf E = \{E_1, \ldots, E_K\}$ forms a document.
%We use the following simple example to illustrate this.
%\begin{example}
%	There are three event types, $``A", ``B"$ and $``C"$. The patterns are $\{[A], [B], [C], [A~B], [B~C] \}$ which forms a dictionary. Suppose $E = \{A, B, C\}$ is one observed sentence.  According to $\mathcal D$, all possible separations for $E$ are $\{[A], [B], [C]\}$, $\{[A~B],[C]\}$ and $\{[A],[B~C]\}$.
%\end{example}
TDM does not take into account event ordering. For example, $E = (A, B, C)$ is an observation with three events, $A$, $B$ and $C$. TDM treats $E^{'} = (C, B, A)$ as the same observation as $E$. Consequently, patterns are also unordered. For instance, patterns $[A~B]$ and $[B~A]$ are viewed as the same.
TDM postulates that a pattern appears in an observation at most once. Let $\theta_w \in [0,1]$ be the probability of pattern that appears in an observation.
The probability distribution of one separation $S$ for observation $E$ is defined to be
\begin{eqnarray}\label{eq:TDM}
P(S) = \prod_{w \in \mathcal D} \theta_w^{\mathbf 1_{ \{ w \in S \} }} (1 - \theta_w)^{\mathbf 1_{ \{ w \notin S \} }}.
\end{eqnarray}
Since separation $S$ is not observed, the marginal probability of $E$ is
\begin{eqnarray*}
	P(E) = \sum_{S \in \mathcal F(E)} P(E, S) = \sum_{S \in \mathcal F(E)} \prod_{w \in \mathcal D} \theta_w^{\mathbf 1_{ \{ w \in S \} }} (1 - \theta_w)^{\mathbf 1_{ \{ w \notin S \} }},
\end{eqnarray*}
where $\mathcal F(E)$ is the set of all possible separations for $E$.
Furthermore, observations are assumed to be independent, i.e., for $\mathbf E = \{E_1, \ldots, E_K\}$, 
\begin{eqnarray*}
	P(\mathbf E) = \prod_{k=1}^K P(E_k).
\end{eqnarray*}

\begin{table}
	\caption{\label{sim1} Simulation setting 1}
	\centering
	\begin{adjustbox}{max width=\textwidth}
		\begin{tabular}{c | c | c c c c c c }
			\hline\hline
			& &  1-10: 1-gram & 11-20: 1-gram &  21-30: 2-gram&  31-40: 2-gram &  41-45: 3-gram &  46-50: 3-gram \\
			\hline
			\multirow{5}{*}{$\theta_{jw}$} & 1 & 0.3 & 0 & 0.2 & 0 & 0 & 0 \\
			& 2 & 0 & 0.3 & 0 & 0.2 & 0 & 0 \\
			& 3 & 0.2 & 0.2 & 0.05 & 0.05 & 0.001 & 0.001  \\
			& 4 & 0.05 & 0.05 & 0 & 0 & 0.3 & 0  \\
			& 5 & 0 & 0 & 0.03 & 0.03 & 0 & 0.3 \\
			\hline
		\end{tabular}
	\end{adjustbox}
 
	\begin{adjustbox}{max width=\textwidth}
	\begin{tabular}{c | l | l }
		\hline\hline
        \multicolumn{3}{c}{Specification of dictionary} \\
		\hline
		\multirow{5}{*}{$\mathcal D$} & 1-grams (1-20) & [1], [2], \ldots, [20] \\
		& 2-grams (21 - 30) & [1~2], [2~3], [3~4], [4~5], [5~1], [6~7], [7~8], [8~9], [9~10], [10~6]  \\
		& 2-grams (31 - 40) & [11~12], [12~13], [13~14], [14~15], [15~11], [16~17], [17~18], [18~19], [19~20], [20~11] \\
		& 3-grams (41 - 45) & [11~12~14], [12~13~15], [13~14~12], [14~15~11], [15~11~13]  \\
		& 3-grams (46 - 50) & [1~2~4], [2~3~5], [3~4~7], [3~9~6], [2~5~6] \\
		\hline
	\end{tabular}
  \end{adjustbox}
\end{table}

\begin{table}
	\caption{\label{sim2} Simulation setting 2}
	\centering
	\begin{adjustbox}{max width=\textwidth}
		\begin{tabular}{c | c | c c c c c c }
			\hline\hline
			& &  1-10: 1-gram & 11-20: 1-gram &  21-30: 2-gram&  31-40: 2-gram &  41-45: 3-gram &  46-50: 3-gram \\
			\hline
			\multirow{5}{*}{$\theta_{jw}$} & 1 & 0.3 & 0 & 0.2 & 0 & 0 & 0 \\
			& 2 & 0.3 & 0 & 0.2 & 0 & 0 & 0 \\
			& 3 & 0 & 0.3 & 0 & 0.2 & 0 & 0 \\
			& 4 & 0 & 0.3 & 0 & 0.2 & 0 & 0 \\
			& 5 & 0.05 & 0.05 & 0 & 0 & 0.3 & 0  \\
			& 6 & 0 & 0 & 0.03 & 0.03 & 0 & 0.3 \\
			\hline
		\end{tabular}
	\end{adjustbox}
	\begin{adjustbox}{max width=\textwidth}
	\begin{tabular}{c | l | l }
		\hline\hline
		\multicolumn{3}{c}{Specification of dictionary} \\
		\hline
		\multirow{5}{*}{$\mathcal D$} & 1-grams (1-20) & [1], [2], \ldots, [20] \\
		& 2-grams (21 - 30) & [1~2], [2~3], [3~4], [4~5], [5~1], [6~7], [7~8], [8~9], [9~10], [10~6]  \\
		& 2-grams (31 - 40) & [11~12], [12~13], [13~14], [14~15], [15~11], [16~17], [17~18], [18~19], [19~20], [20~11] \\
		& 3-grams (41 - 50) & [1~2~4], [2~3~5], [3~4~7], [3~9~6], [2~5~6] \\
		& 3-grams (46 - 50) & [11~12~14], [12~13~15], [13~14~12], [14~15~11], [15~11~13]  \\
		\hline
	\end{tabular}
    \end{adjustbox}
\end{table}

\begin{table}
	\caption{\label{sim3} Simulation setting 3}
	\begin{adjustbox}{max width=\textwidth}
		\begin{tabular}{c | c | c c c c c c c c c}
			\hline\hline
			& &  1-15: 1-gm & 15-30: 1-gm &  31-35: 2-gm&  36-50: 2-gm &  50-60: 2-gm &  61-70: 3-gm & 71-75: 3-gm & 76-85: 4-gm & 86-90: 4-gm \\
			\hline
			\multirow{5}{*}{$\theta_{jw}$} & 1 & 0.15 & 0 & 0 & 0 & 0 & 0.06 & 0.06 & 0 & 0  \\
			& 2 & 0 & 0.15 & 0.06 & 0.06 & 0.06 & 0 &0 &0 &0  \\
			& 3 & 0.05 & 0.05 & 0.05 & 0.001 & 0.001 & 0.05 &0.001 &0.001 &0.001  \\
			& 4 & 0 & 0 & 0.03 & 0.03 & 0 & 0 &0 &0.05 &0 \\
			& 5 & 0.04 & 0.04 & 0 & 0 & 0 & 0 &0 &0 & 0.1 \\
			\hline
		\end{tabular}
	\end{adjustbox}
	\begin{adjustbox}{max width=\textwidth}
	\begin{tabular}{c | l | l }
		\hline\hline
		\multicolumn{3}{c}{Specification of dictionary} \\
		\hline
		\multirow{5}{*}{$\mathcal D$} & 1-grams (1-30) & [1], [2], \ldots, [30] \\
		& 2-grams (31 - 45) & [1~2], [2~1], [2~3], [3~2], [3~4], [4~3], [4~5], [5~4], [5~1], [1,5], [6~7], [7~8], [8~9], [9~10], [10~6]  \\
		& 2-grams (46 - 60) & [11~12], [12~13], [13~14], [14~15], [15~11], [16~17], [17~18], [18~19], [19~20], [20~11],
		[1~11], [2~12], [3~13], [4~14], [5~15] \\
		& 3-grams (61 - 68) & [1~2~4], [2~3~5], [3~4~7], [3~9~6], [2~5~6], [2~1~4], [3~2~5], [4~2~7] \\
		& 3-grams (69 - 75) & [11~12~14], [12~13~15], [13~14~12], [14~15~11], [15~11~13], [12~11~14], [13~12~15]  \\
		& 4-grams (76 - 83) & [1~2~3~4], [2~3~5~1], [3~4~7~1], [3~9~6~2], [2~5~6~4], [3~4~1~2], [5~1~7~8], [6~9~3~4] \\
		& 4-grams (84 - 90) & [11~12~13~14], [12~13~15~11], [13~14~17~11], [24~25~26~27], [24~26~28~30], [11~16~21~26], [16~11~26~21]  \\
		\hline
	\end{tabular}
\end{adjustbox}
\end{table}

\begin{table}
	\caption{\label{sim4} Simulation setting 4}
	\centering
	\begin{adjustbox}{max width=\textwidth}
		\begin{tabular}{c | c | c c c c c c }
			\hline\hline
			& &  1-10: 1-gram & 11-20: 1-gram &  21-30: 2-gram&  31-40: 2-gram &  41-45: 3-gram &  46-50: 3-gram \\
			\hline
			\multirow{5}{*}{$\theta_{jw}$} & 1 & 0.15 & 0.15 & 0.1(except 21) & 0 & 0 & 0 \\
			& 2 & 0.15 & 0.15 & 0.1(except 22) & 0 & 0 & 0 \\
			& 3 & 0.1 & 0.1 & 0 & 0.15 & 0.0 & 0.0  \\
			& 4 & 0.05 & 0.05 & 0 & 0 & 0.3 & 0  \\
			& 5 & 0 & 0 & 0.03 & 0.03 & 0 & 0.3 \\
			\hline
		\end{tabular}
	\end{adjustbox}
	
	\begin{adjustbox}{max width=\textwidth}
		\begin{tabular}{c | l | l }
			\hline\hline
			\multicolumn{3}{c}{Specification of dictionary} \\
			\hline
			\multirow{5}{*}{$\mathcal D$} & 1-grams (1-20) & [1], [2], \ldots, [20] \\
			& 2-grams (21 - 30) & [1~2], [2~3], [3~4], [4~5], [5~1], [6~7], [7~8], [8~9], [9~10], [10~6]  \\
			& 2-grams (31 - 40) & [11~12], [12~13], [13~14], [14~15], [15~11], [16~17], [17~18], [18~19], [19~20], [20~11] \\
			& 3-grams (41 - 45) & [11~12~14], [12~13~15], [13~14~12], [14~15~11], [15~11~13]  \\
			& 3-grams (46 - 50) & [1~2~4], [2~3~5], [3~4~7], [3~9~6], [2~5~6] \\
			\hline
		\end{tabular}
	\end{adjustbox}
\end{table}

\begin{table}
	\caption{Simulation results under three simulation settings. }
	%\caption{``Correct recovery \% ": the percentage of true patterns are identified. ``False recovery \%": the percentage of patterns are being identified wrongly. ``$k$-gram hitting": the correct proportion of true $k$-gram patterns appearing in the top $M_k$ estimated $k$-grams.
	%In last two rows, we present the clustering results, estimates and RMSEs of latent class probabilities.}
	\label{simtab}
	\centering
	\begin{adjustbox}{max width=\textwidth}
		\begin{tabular}{c | c c c c c c }
			\hline\hline
			\multicolumn{7}{c}{Setting 1} \\
			\hline
			&  Correct recovery \% & False recovery \% & 2-gram hitting & 3-gram hitting & Class recovery &   \\
			\hline
			No time &  99.6 \% & 0.3 \% & 100.0 \% & 98.8\% & 84 \% & \\
			With time &  99.9 \% & 0.1 \% & 100.0 \% & 99.6\% & 94 \%  &  \\
			\hline\hline
			& & C1 & C2 & C3 & C4 & C5 \\
			\hline
			\multirow{2}{*}{No time} & $\hat \pi$ &  0.40 & 0.298 & 0.199 & 0.053 & 0.049  \\
			& RMSE & 0.015 & 0.014 & 0.014 & 0.007 & 0.007 \\
			\hline
			\multirow{2}{*}{With time} & $\hat \pi$ &  0.399 & 0.299 & 0.202 & 0.049 & 0.051 \\
			& RMSE & 0.016 & 0.013 & 0.014 & 0.006 & 0.007 \\
			\hline\hline
			& & $\lambda_1$ & $\lambda_2$ & $\lambda_3$ & $\lambda_4$ & $\lambda_5$ \\
			\hline
			\multirow{2}{*}{No time} & $\{\hat \lambda_j\}$ &  9.99 & 2.50 & 1.00 & 0.497 & 0.200  \\
			& RMSE & 0.080 & 0.026 & 0.014 & 0.013 & 0.005 \\
			\hline
			\multirow{2}{*}{With time} & $\{\hat \lambda_j\}$ &  10.0 & 2.50 & 0.999 & 0.501 & 0.201  \\
			& RMSE & 0.072 & 0.024 & 0.014 & 0.012 & 0.005 \\
			\hline
		\end{tabular}
	\end{adjustbox}
 
	\medskip 

	\begin{adjustbox}{max width=\textwidth}
	\begin{tabular}{c | c c c c c c c}
		\hline\hline
		\multicolumn{8}{c}{Setting 2} \\
		\hline
		&  Correct recovery \% & False recovery \% & 2-gram hitting & 3-gram hitting & Class recovery & Reduced recovery  & \\
		\hline
		No time &  96.6 \% & 3.5 \% & 100.0 \% & 89.6\% & 0 \% & 92 \% & \\
		With time &  97.3 \% & 2.7 \% & 100.0 \% & 91.8\% & 98 \%  & - & \\
		\hline\hline
		& & C1 & C2 & C3 & C4 & C5 & C6\\
		\hline
		\multirow{2}{*}{No time} & $\hat \pi$ &  0.396 & - & 0.402 & - & 0.102 & 0.100 \\
		& RMSE & 0.018 & - & 0.016 & - & 0.009 & 0.008 \\
		\hline
		\multirow{2}{*}{With time} & $\hat \pi$ &  0.201 & 0.198 & 0.201 & 0.201 & 0.099 & 0.100\\
		& RMSE & 0.012 & 0.014 & 0.012 & 0.012 & 0.008 & 0.009 \\
		\hline\hline
		& & $\lambda_1$ & $\lambda_2$ & $\lambda_3$ & $\lambda_4$ & $\lambda_5$ & $\lambda_6$ \\
		\hline
		\multirow{2}{*}{No time} & $\{\hat \lambda_j\}$ &  0.378 & - & 0.380 & - & 0.997 & 0.999 \\
		& RMSE & 0.017 & - & 0.016 & - & 0.017 & 0.020  \\
		\hline
		\multirow{2}{*}{With time} & $\{\hat \lambda_j\}$ &  0.200 & 4.01 & 0.200 & 0.201 & 1.00 & 1.00\\
		& RMSE & 0.002 & 0.043 & 0.002 & 0.046 & 0.017 & 0.019 \\
		\hline
	\end{tabular}
\end{adjustbox}

	\medskip 
	
\begin{adjustbox}{max width=\textwidth}
	\begin{tabular}{c | c c c c c c }
		\hline\hline
		\multicolumn{7}{c}{Setting 3} \\
		\hline
		&  Correct recovery \% & False recovery \% & 2-gram hitting & 3-gram hitting & 4-gram hitting & Class recovery   \\
		\hline
		With time &  98.9 \% & 1.4 \% & 100.0 \% & 99.0\% & 96.3 \%  & 98 \% \\
		\hline\hline
		& & C1 & C2 & C3 & C4 & C5 \\
		\hline
		\multirow{2}{*}{With time} & $\hat \pi$ &  0.303 & 0.298 & 0.201 & 0.099& 0.100 \\
		& RMSE & 0.010 & 0.010 & 0.009 & 0.007 & 0.007  \\
		\hline\hline
		& & $\lambda_1$ & $\lambda_2$ & $\lambda_3$ & $\lambda_4$ & $\lambda_5$  \\
		\hline
		\multirow{2}{*}{With time} & $\{\hat \lambda_j\}$ &  9.99 & 2.50 & 1.00 & 0.50 & 0.20 \\
		& RMSE & 0.085 & 0.017 & 0.013 & 0.008 & 0.003  \\
		\hline
	\end{tabular}
\end{adjustbox}
\end{table}

\begin{table}
	\caption{Simulation results under the setting 4. }
	%\caption{``Correct recovery \% ": the percentage of true patterns are identified. ``False recovery \%": the percentage of patterns are being identified wrongly. ``$k$-gram hitting": the correct proportion of true $k$-gram patterns appearing in the top $M_k$ estimated $k$-grams.
	%In last two rows, we present the clustering results, estimates and RMSEs of latent class probabilities.}
	\label{simtab:set4}
	\centering
	\begin{adjustbox}{max width=\textwidth}
		\begin{tabular}{c | c c c c  }
			\hline\hline
			\multicolumn{5}{c}{Setting 4} \\
			\hline
			&  $\hat J = 4$  &  $\hat J = 5$ &  &       \\
			\hline
			Percentage &  66 \%  & 30 \% &  &     \\
			\hline\hline
			\hline
			&  Correct recovery \% & False recovery \% & 2-gram hitting & 3-gram hitting      \\
			\hline
			Recovery of $\mathcal D$ &  99.9 \% & 0.1 \% & 100.0 \% & 99.8\%    \\
			\hline\hline
			 & C1 & C2 & C3 & C4  \\
			\hline
			\multirow{1}{*}{$\hat \pi$ ~ (when $\hat J = 4)$} &  0.638& 0.209 & 0.075 & 0.077 \\
			\hline
		\end{tabular}
	\end{adjustbox}
\end{table}

\begin{table}
	\caption{The table contains real data results, including clustering information, most frequent patterns and estimated class-specific parameters.}\label{result}
	\centering
	\scalebox{0.75}{
		\begin{tabular}{l|cccccc}
			\hline
			\hline
			& \multicolumn{6}{|c}{ Summary of six latent classes } \\
			\hline
			&  C1 & C2  &  C3  & C4  & C5  &  C6 \\
			\hline
			$\pi$  & 25.8 \% (0.38 \%)  & 25.0 \% (0.28 \%) & 19.9 \% (0.21 \%)  & 14.8 \% (0.22 \%)  & 8.3 \% (0.18 \%) & 6.0 \% (0.21 \%) \\
			Correct  &  83.4 \% & 89.9 \%  & 98.4 \%  & 66.1 \%  & 54.5 \%  & 84.7 \% \\
			$\{\lambda_j\}$  & 0.52 (4e-3) & 0.82 (6e-3) & 0.87 (5e-3) & 0.69 (5e-3) &  0.62 (7e-3) &  0.27 (4e-3) \\
			Avg Sent.  & 5.9  & 12.8  & 8.1  & 18.4  &  11.7 &  6.1 \\
			Avg Event.  & 17.3  & 34.7  & 23.5  & 46.7  & 30.1  & 18.3  \\
			\hline
			2-grams  &  \multicolumn{6}{|c}{$\theta_{jw}$}  \\
			\hline
			$[20~9]$ & 2e-4  & 6.7e-2  &  \textbf{0.16} & 4.6e-2  & 1.7e-2  & 7.8e-2 \\
			$[10~8]$ &  \textbf{0.10} &  1.5e-2 &  1.7e-2 & 1.5e-2  & 1.5e-2  & 1.7e-2 \\
			$[9~20]$ & 1e-4  & 4.1e-2  & \textbf{0.10}  & 2.8e-2  & 9e-3  & 1.4e-2 \\
			$[8~10]$ & \textbf{7.2e-2}  & 1.4e-2  & 1.6e-2  & 1.2e-2  & 1.8e-2  & 1.0e-2 \\
			$[3~4]$ & 1.0e-2  & 3.8e-2  & 5e-3  & 1.8e-2  & 3.3e-2  & \textbf{8.1e-2} \\
			$[21~14]$ & 7e-3  & 4.5e-2  & 2e-4 & 2.7e-4  & 3.9e-2  & 4.5e-2 \\
			$[14~21]$ & 2.7e-3 &  2.5e-2 &  1e-4 & 1.4e-2  & 2e-2  & 1.9e-2 \\
			$[6~19]$ & 1.3e-2  & 1.0e-2  & 1.4e-2  & 1.8e-2  & 1.2e-2  & 2.9e-2 \\
			$[5~15]$ & 2.0e-2  & 1.6e-2  & 1.5e-2  & 2.0e-2  & 9.6e-3  & 2.7e-2 \\
			$[4~3]$ & 6e-3  & 2.3e-2  & 1.9e-3  & 8e-3  & 1.8e-2  & 2e-2 \\
			$[8~9]$ & 4e-4  & 2.4e-2  &  1.4e-2 & 1.8e-2  & 1.5e-2  & 1.9e-2 \\
			$[6~23]$ &  2e-3 & 1.3e-2 &  2e-4 & 1.1e-2  & 1.4e-2  & 1.9e-2 \\
			$[21~12]$ & 1.5e-3 & 3e-3  & 2e-4  & 6e-3  & 1.7e-2  & 8.7e-3 \\
			$[9~8]$ & 2e-4  & 1.9e-2  & 8e-3  & 1.2e-2  & 9.8e-3  & 1.1e-2 \\
			$[20~8]$ & 1.8e-4  & 4e-3  & 1.9e-3  & 1.2e-2  & 1.7e-2  & 5e-3 \\
			\hline
			3-grams  &   \multicolumn{6}{|c}{$\theta_{jw}$}  \\
			\hline
			$[3~4~22]$ & \textbf{0.11}  & 5.8e-2  & \textbf{0.12}  & 2.2e-2  & 1.4e-2  & 5.1e-2 \\
			$[6~19~16]$  & 6.6e-2  & 6.3e-2  & \textbf{9.7e-2}  & 4.5e-2  & 2.5e-2  & 4.5e-2 \\
			$[10~5~15]$  & 5.8e-2  & 4.1e-2  & \textbf{8.0e-2}  & 2.9e-2  & 1.1e-2  & 7.6e-2 \\
			$[16~19~6]$  & 3.2e-2  & 4.0e-2  & \textbf{6.9e-2}  & 2.3e-2  & 1.3e-2  & 1.6e-2 \\
			$[21~14~22]$  & 7e-3  & \textbf{3.9e-2}  & 1e-2  & 1e-2  & 8.5e-3  & \textbf{5.5e-2} \\
			$[22~4~3]$  & 3.1e-2  & 2.6e-2  & \textbf{6e-2}  & 9.8e-3  & 3.8e-3  & 4e-3 \\
			$[10~8~9]$  & 2.2e-2  & \textbf{5.7e-2}  & \textbf{5.7e-2}  & 2.5e-2  & 2.9e-2  & \textbf{5.0e-2} \\
			$[15~5~10]$ & 1.0e-2  & 2.4e-2  & 4.2e-2  & 1.4e-2  & 3e-3  & 1.1e-2 \\
			$[9~8~10]$  & 3.7e-3  & 3.8e-2  & 4.1e-2  & 1.5e-2  & 1.5e-2  & 1.3e-2 \\
			$[10~8~20]$  & 2e-4  & 3.2e-3  & 1.9e-3  & 7.5e-3  & \textbf{3.3e-2}  & 5.2e-3 \\
			$[3~12~21]$  & 3.7e-3  & 2.3e-3  & 7.5e-4  & 9.1e-3  & \textbf{3.5e-2}  & 3e-3 \\
			$[20~8~10]$  & 4e-4 & 3.4e-3  & 2.1e-3  & 8.5e-3  & \textbf{3.4e-2}  & 6.9e-3 \\
			$[21~12~3]$  & 5.1e-3  & 1.7e-3  & 8e-4  & 7.8e-3 & \textbf{3.2e-2}  & 5.8e-3 \\
			$[22~14~21]$ & 2.1e-3  & 2.6e-2  & 1.0e-2  & 8.3e-3  & 5.1e-3  & 7.8e-3 \\
			$[10~5~7]$  & 1.5e-4  & 2.1e-3  & 3.5e-4  & \textbf{1.3e-2}  & 4.4e-3  & 1.9e-3 \\
			\hline
		\end{tabular}
	}
\end{table}

\begin{table}
	\caption{The table contains the explanation of key patterns.}\label{pattern:explanation}
	\centering
	\scalebox{0.95}{
		\begin{tabular}{l|l}
			\hline
			\hline
			Explanation & Patterns\\
			\hline
			Path connecting ``Silver" and ``Park" & $[20~9]$, $[9~20]$, $[21~14~22]$, $[22~14~21]$  \\
			Path connecting ``Nobel" and ``Park"  & $[10~5~15]$, $[10~8~9]$, $[15~5~10]$, $[9~8~10]$  \\
			Path connecting ``Lincoln" and ``Park"  & $[3~4~22]$, $[6~19~16]$, $[16~19~6]$, $[22~4~3]$  \\
			Path connecting ``Lincoln" and ``Silver" & $[3~12~21]$, $[21~12~3]$  \\
			Path connecting ``Nobel" and ``Silver" &  $[10~8~20]$, $[20~8~10]$ \\
			Partial path between ``Silver" and ``Park" &  $[21~14]$, $[14~21]$   \\
			Partial path between ``Nobel" and ``Park"  & $[10~8]$, $[8~10]$, $[5~15]$, $[8~9]$, $[9~8]$  \\
			Partial path between ``Lincoln" and ``Park"  & $[6~19]$,  $[3~4]$, $[4~3]$, $[]$    \\
			Partial path between ``Lincoln" and ``Silver" & $[21~12]$,   \\
			Partial path between ``Nobel" and ``Silver" & $[20~8]$  \\
			Wrong path & $[6~23]$, $[10,5,7]$ \\
			\hline
		\end{tabular}
	}
\end{table}

\begin{table}
	\caption{Identified patterns from ``Traffic" item. Column ``Class" represents the label of latent class with high corresponding pattern probability. }\label{realdata:dictionary}
	\centering
	\scalebox{0.95}{
		\begin{tabular}{|l|l|}
			\hline
			\hline
			1-grams & Class  \\
			\hline
			$[1]$ & 4 \\
			$[2]$ & 4  \\
			$[3]$ & 5, 6\\
			$[4]$ & 5,6 \\
			$[5]$ & 4,6 \\
			$[6]$ & 6  \\
			$[7]$ & 4 \\
			$[8]$ & 1,4,5,6 \\
			$[9]$ & 1 \\
			$[10]$ & 1,4,5,6\\
			$[11]$ & 4 \\
			$[12]$ & 5\\
			$[13]$ & 4 \\
			$[14]$ & 5, 6 \\
			$[15]$ & 4  \\
			$[16]$ & 4 \\
			$[17]$ & 4 \\
			$[18]$ & 4 \\
			$[19]$ & 4,5,6 \\
			$[20]$ & 1\\
			$[21]$ & 5 \\
			$[22]$ & 2,4,5,6\\
			$[23]$ & 2,4,5,6 \\
			\hline
		\end{tabular}
	
	\begin{tabular}{|l|l|l|l|}
		\hline
		\hline
		 2-grams & Class & 2-grams & Class   \\
		\hline
	    $[20~9]$ & 3 & $[22~9]$ & 5 \\
		 $[10~8]$ & 1 &	$[12~21]$ & 5 \\
		 $[9~20]$ & 3 & $[15~5]$ & 4,6  \\
		 $[8~10]$ & 1 & $[19~6]$ & 6  \\
		 $[3~4]$ & 6  & $[23~6]$ & 2,4,5,6 \\
	     $[21~14]$ & 2,4,5,6 & $[22~4]$ & 1  \\
		$[14~21]$ & 2,4,5,6  & $[12~3]$ & 5 \\
	 $[6~19]$ & 6 & $[5~10]$ & 1,2,3,4 \\
	 $[5~15]$ & 6 & $[10~5]$ & 3,4,6 \\
		  $[4~3]$ & 2,5,6 & $[16~19]$ & 2,4,6 \\
		$[8~9]$ & 2,3,4,5,6 & $[14~22]$ & 2,5,6 \\
		$[6~23]$ & 2,4,5,6  & $[4~14]$ & 5 \\
		 $[21~12]$ & 5 & $[5~7]$ & 4   \\
		 $[9~8]$ & 2  & $[18~17]$ & 1,4 \\
		 $[20~8]$ & 4,5 & $[14~4]$ & 5 \\
		 $[8~20]$ & 4,5 & $[7~5]$ & 4  \\
		 $[3~12]$ & 5  & $[7~11]$ & 4 \\
		 $[9~22]$ & 5 & $[19~11]$ & 4 \\
		 $[19~6]$ & 4,6 &  $[17~18]$ & 1,4 \\
		  $[4~22]$ & 1,6 & & \\
		\hline
	\end{tabular}

		\begin{tabular}{|l|l|}
			\hline
			\hline
			 3-grams &  Class \\
			\hline
			 $[3~4~22]$ & 1,3\\
		    $[6~19~16]$ & 3 \\
			  $[10~5~15]$ & 1,3,6\\
		    $[16~19~6]$ & 3 \\
			 $[21~14~22]$ & 2,6 \\
			$[22~4~3]$ & 1,2,3  \\
			 $[10~8~9]$ & 2,3,6 \\
			 $[15~5~10]$ & 2,3 \\
			$[9~8~10]$ & 2,3 \\
			 $[10~8~20]$ & 5\\
			$[3~12~21]$ & 5 \\
			 $[20~8~10]$ & 5 \\
			 $[21~12~3]$ & 5 \\
			 $[22~14~21]$ & 2 \\
			$[10~5~7]$ & 4 \\
		     $[6~19~11]$ & 4 \\
			 $[7~5~10]$ & 4 \\
			$[11~19~6]$ & 4 \\
			 $[20~10~8]$ & 5,6 \\
			 $[22~3~4]$ & 1,3\\
			\hline
		\end{tabular}
	}
\end{table}

\end{document}